\newcommand\reallywidehat[1]{%
\savestack{\tmpbox}{\stretchto{%
  \scaleto{%
    \scalerel*[\widthof{\ensuremath{#1}}]{\kern-.6pt\bigwedge\kern-.6pt}%
    {\rule[-\textheight/2]{1ex}{\textheight}}
  }{\textheight}%
}{0.5ex}}%
\stackon[1pt]{#1}{\tmpbox}%
}
\providecommand*{\cupdot}{%
  \mathbin{%
    \mathpalette\@cupdot{}%
  }%
}
\newcommand*{\@cupdot}[2]{%
  \ooalign{%
    $\m@th#1\cup$\cr
    \hidewidth$\m@th#1\cdot$\hidewidth
  }%
}
\providecommand*{\bigcupdot}{%
  \mathbin{%
    \mathpalette\@bigcupdot{}%
  }%
}
\newcommand*{\@bigcupdot}[2]{%
  \ooalign{%
    $\m@th#1\bigcup$\cr
    \hidewidth$\m@th#1\cdot$\hidewidth
  }%
}
\theoremstyle{bold}
\newtheorem{theorem}{Theorem}[section]
\newtheorem{proposition}[theorem]{Proposition}
\newtheorem{lemma}[theorem]{Lemma}
\newtheorem{corollary}[theorem]{Corollary}
\newtheorem{definition}[theorem]{Definition}
\newtheorem{remark}[theorem]{Remark}
\newtheorem{example}[theorem]{Example}
\newcommand{\supp}{\mathrm{supp}}
\DeclareMathOperator{\dens}{dens}
\newcommand{\cD}{\mathcal{D}}
\newcommand{\cL}{\mathcal{L}}
\newcommand{\cM}{\mathcal{M}}
\newcommand{\ZZ}{\mathbb{Z}}
\newcommand{\RR}{\mathbb{R}}
\newcommand{\CC}{\mathbb{C}}
\newcommand{\oplam}{\mbox{\Large $\curlywedge$}}
\begin{document}

\title[Pure point diffraction in cut-and-project sets]{A  short guide to pure point diffraction \\ in cut-and-project sets}

\dedicatory{We dedicate this work to Tony Guttmann on the occasion of his $70^{th}$ birthday.}

\author{Christoph Richard}
\address{Department f\"{u}r Mathematik, Friedrich-Alexander-Universit\"{a}t Erlangen-N\"{u}rnberg,
Cauerstrasse 11, 91058 Erlangen, Germany}
\email{christoph.richard@fau.de}

\author{Nicolae Strungaru}
\address{Department of Mathematical Sciences, MacEwan University \\
10700 “ 104 Avenue, Edmonton, AB, T5J 4S2\\
and \\
Department of Mathematics\\
Trent University \\
Peterborough, ON
and \\
Institute of Mathematics ``Simon Stoilow''\\
Bucharest, Romania}
\email{strungarun@macewan.ca}
\urladdr{http://academic.macewan.ca/strungarun/}

\renewcommand{\thefootnote}{\fnsymbol{footnote}}
\footnotetext{\emph{Key words:} regular model set, cut-and-project scheme, Poisson Summation Formula, diffraction}
\footnotetext{\emph{PACS numbers: 02.30.Nw, 02.30.Px, 42.25.Fx}}
\renewcommand{\thefootnote}{\arabic{footnote}}

 \maketitle

\begin{abstract}
We briefly review the diffraction of quasicrystals and then give an elementary alternative proof of the diffraction formula for regular cut-and-project sets, which is based on Bochner's theorem from Fourier analysis. This clarifies a common view that the diffraction of a quasicrystal is determined by the diffraction of its underlying lattice.
To illustrate our approach, we will also treat a number of well-known explicitly solvable examples.
\end{abstract}

\section{Outline}

Quasicrystals are highly ordered rigid structures that are -- unlike crystals -- intrinsically non-periodic. Their discovery by diffraction experiments on certain rapidly cooled alloys \cite{S, I} in 1982 was a surprise, since a combination of the above two properties was regarded unphysical at that time. In fact quasicrystals have been recently found in nature \cite{Bi09, Bi15}. In 2011 the Nobel Prize in chemistry was awarded to Dan Shechtman for the discovery of quasicrystals.

A natural mathematical idealisation of quasicrystals are (regular) cut-and-project sets, i.e., projections of suitable lattice subsets from some higher-dimensional superspace.  The Bragg part in the diffraction of a cut-and-project set had been rigorously computed by Hof \cite{Hof1} in 1995. Pure point diffractivity of a cut-and-project set was proved by Schlottmann \cite{Martin2} however only in 2000 (in fact for a more general setting than Euclidean space).
It has been argued from the beginning that the diffraction of a cut-and-project set should be deducible from that of the underlying lattice. However a corresponding proof  appeared only in 2013 by Baake and Grimm \cite{BG2}%
\footnote{The cited monograph is a very useful compendium about mathematical quasicrystals and, more generally, on mathematics of aperiodic order.}
for Euclidean superspace. In fact the diffraction formula for a cut-and-project set is, in a certain sense, equivalent to the diffraction formula for the lattice in superspace, as has been shown recently \cite{RS2}. The diffraction formula also holds for non-regular cut-and-project schemes of extremal density \cite{BHS, KR}, and it may be used to describe the Bragg part in the diffraction of certain random cut-and-project sets \cite{BMRS, LR, CR}.

Whereas there are some excellent expository reviews on aperiodic order, see e.g.~\cite{BG1, BG3}, it is our intention to actually re-derive a substantial part of the above results with relatively modest mathematical prerequisites, thereby reflecting recent important developments in the theory. This article may thus serve as a gentle introduction to mathematical diffraction theory of cut-and-project sets. In fact our derivation is not restricted to Euclidean superspace. It covers non-Euclidean examples such as in \cite{BMS, BM, BMRS}, which are of recent mathematical interest as they may code number theoretic problems \cite{BKKL2015, BJL, BHS, KR}.
Our techniques might even be adapted to cover certain non-abelian superspaces that were considered very recently \cite{BHP2}. Also, the assumption of a lattice in superspace for the cut-and-project construction might be relaxed considerably, by considering some positive definite measure. In particular, this might be applied to describe certain modulations in lattices or quasicrystals. Our approach rests on Bochner's theorem from Fourier analysis.  We will thus assume some familiarity with harmonic analysis in Euclidean space, measure theory and integration on groups. Some relevant results and references are collected in the appendix.

In the following sections, we will recall diffraction theory and cut-and-project sets, and we will then discuss pure point diffractivity. In Section~\ref{ldiff}, we will prove the diffraction formula for a lattice beyond Euclidean space and then show in Section~\ref{sec:ms} how this gives rise to the diffraction formula for the cut-and-project sets. For concreteness, we will illustrate our approach to the diffraction formula on well-known one-dimensional examples: Fibonacci sets, the periodic doubling point set \cite{BM} and squarefree integers \cite{me73, bmp00}.  The latter two have a non-Euclidean internal space, and the last one is a non-regular model set of positive configurational entropy, yet with pure point diffraction. These examples have all been treated a number of times before, see also \cite{BG2}. A certain diffraction formula for the latter model already appears in \cite{NL92}. But our way of computing the diffraction is self-contained and partly different. It thus complements previous approaches.

\section{Diffraction experiments and quasicrystals}

\subsection{Diffraction experiments}
Diffraction experiments are performed in order to resolve the internal structure of matter \cite{LP}. In such an experiment, a specimen is hit by an x-ray or neutron beam, and the intensity of the wave resulting from the interaction of the beam with the specimen is recorded at a plate perpendicular to the specimen at large distance.
Figure~\ref{fig:diff} left shows the x-ray diffraction image of an AlMnPd alloy perpendicular to a direction of tenfold symmetry, such as in \cite{S}. In fact the alloy also admits directions of two-fold and six-fold symmetry, supporting the presence of an icosahedral symmetry, which is incompatible with translational invariance.
\begin{figure}[htb]
\begin{center}
\begin{minipage}[b]{0.4\textwidth}
\center{\epsfig{file=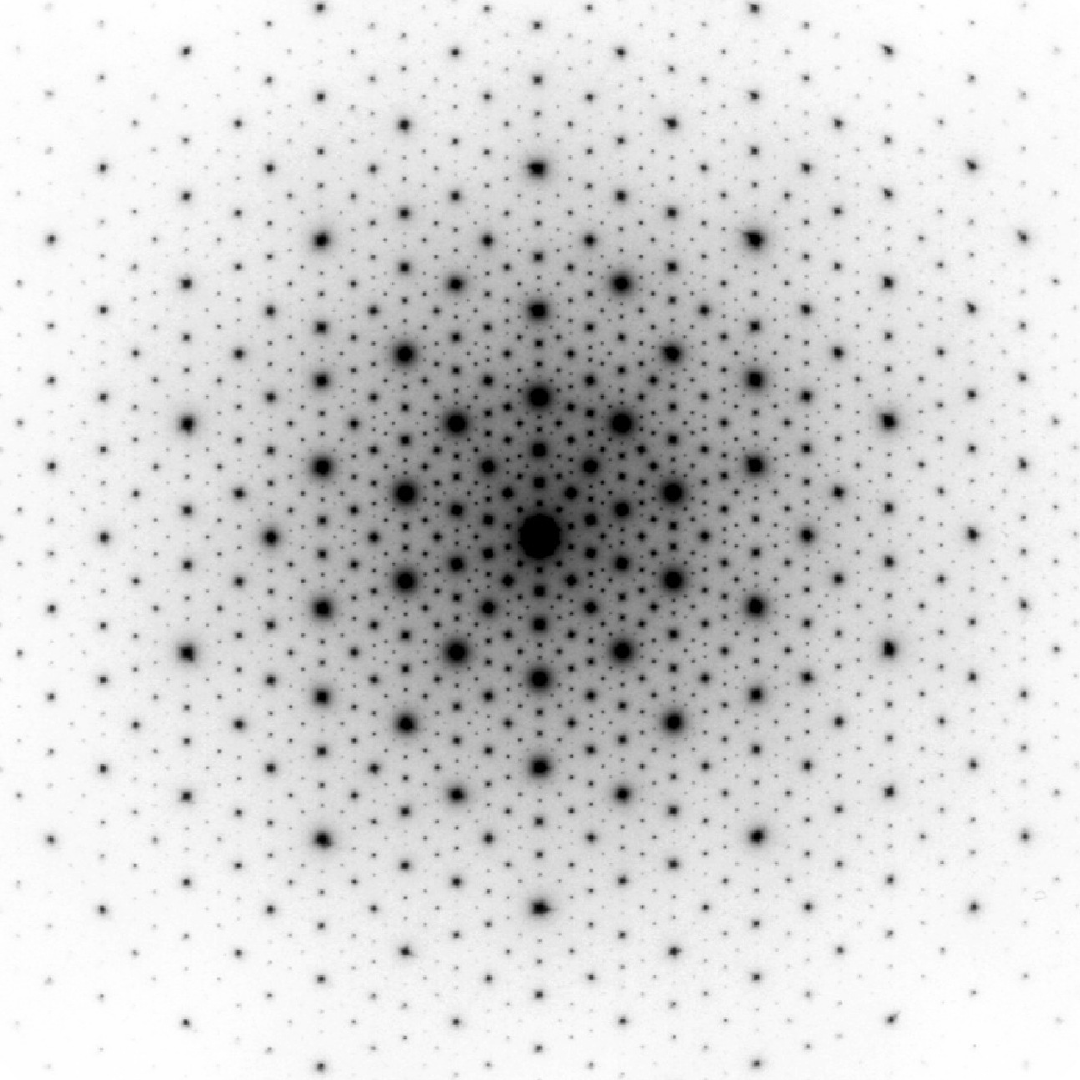,width=4cm, angle=90}}
\end{minipage}
\hfill
\raisebox{-0.2cm}{\begin{minipage}[b]{0.5\textwidth}
\center{\epsfig{file=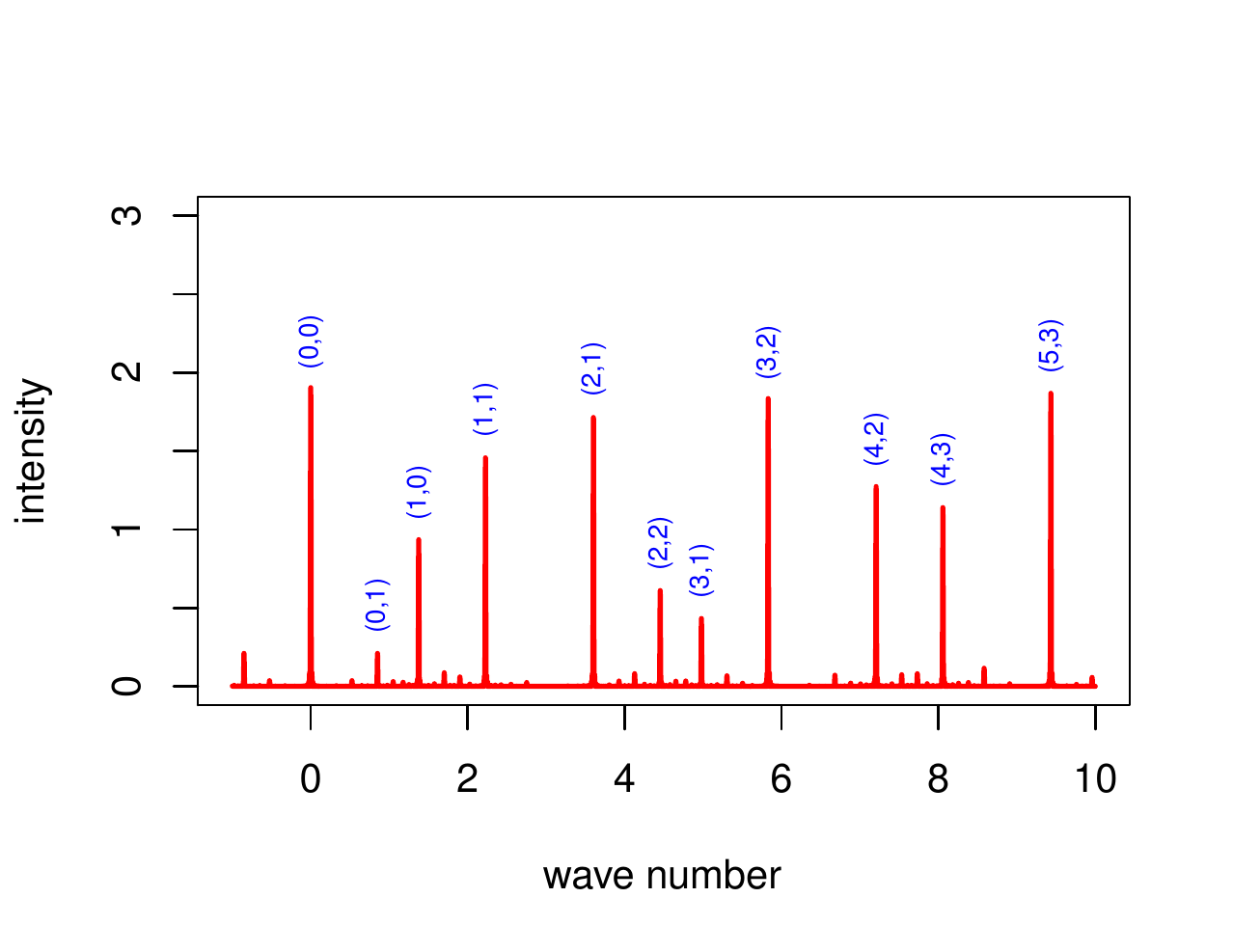, width=6cm}}
\end{minipage}}
\end{center}
\caption{left: Diffraction of an AlMnPd quasicrystal (copyright C.~Beeli). right: Diffraction of a Fibonacci set sample of size 100.}
\label{fig:diff}
\end{figure}
As for conventional crystals, the positions of the Bragg peaks can be indexed by integer linear combinations of a given set of fundamental vectors. For example,
the diffraction spots on the horizontal in the left panel of Figure~\ref{fig:diff}  may be indexed using two length scales whose ratio is the golden number $\tau=(1+\sqrt{5})/2$. But altogether the above example needs six fundamental vectors, instead of three for conventional crystals. Thus one is tempted to conjecture a dense set of Bragg peak positions, in contrast to the crystal case with an underlying lattice structure. Such ``unusual crystals'' were called quasicrystals by Levine and Steinhardt \cite{LS84}.

\subsection{Models for quasicrystals}\label{qcm}

Consider a (finite) specimen with atom positions $\Lambda\subset \mathbb R^3$. We assume for simplicity that all atoms are of the same type
and that  the incident beam is a monochromatic plane wave $\boldsymbol r\mapsto e^{-2\pi\imath \boldsymbol{k}_0\cdot\boldsymbol{r}}$. In kinematical%
\footnote{A realistic modelling at small wavelengths has to incorporate absorption cross ratios \cite{GW}.}
diffraction \cite[Sec.~II]{C}, the so-called structure factor $F(\boldsymbol s) =\sum_{\boldsymbol{p}\in \Lambda} e^{-2\pi\imath \boldsymbol{s}\cdot\boldsymbol{p}}$ describes the superposition of waves scattered by the atoms, and the diffraction intensity at large distance $\boldsymbol r$ is given by
\begin{displaymath}
I(\boldsymbol{r})=\frac{A}{|\boldsymbol{r}|^2}|F(\boldsymbol k-\boldsymbol k_0)|^2,
\end{displaymath}
where $\boldsymbol{k}=|\boldsymbol k_0| \cdot\frac{\boldsymbol r}{|\boldsymbol r|}$. Here $|\boldsymbol r|$ denotes the Euclidean norm of $\boldsymbol r$. By plotting the function $\boldsymbol s\mapsto |F(\boldsymbol s)|^2$ for a given model, one can thus study which point configurations match the diffraction in Figure~\ref{fig:diff} left.

Two fundamentally different types of model have been suggested. The first approach will be discussed below for the so-called cut-and-project construction, which yields a rigid point configuration from a lattice in higher dimensional space by some deterministic procedure. The second approach admits some randomness in the construction, and corresponding models have been suggested by Shechtman and Blech \cite{SB} and by Elser \cite{E}, the latter belonging to the class of random tilings \cite{Hen, RHHB, CR99}.  Also the nowadays prominent soft quasicrystals \cite{LD07} fall into that category. Due to their intrinsic randomness, diffraction of such models might have a continuous component. But a continuous component seems absent in the left panel of Figure~\ref{fig:diff}, apart from thermal fluctuations.

\subsection{Cut-and-project sets}\label{sec:cpsets}

The cut-and-project construction was developed by Kramer and Neri \cite{KN} before the discovery of quasicrystals, in order to systematically produce non-periodic space fillings with prescribed symmetries \cite{K1, Se}. Kramer and Neri were inspired by de Bruijn's so-called grid method \cite{dBr} for constructing Penrose tilings of the plane \cite{P}. The cut-and-project method is in fact equivalent to a multi-dimensional extension of the grid method \cite{GR}, compare \cite[Sec.~7.5.2]{BG2}. We illustrate it in Figure~\ref{fig:Fibo} for a Fibonacci set,
compare \cite[Ex.~7.3 and Sec.~7.5.1]{BG2}. Such sets may also be constructed recursively by a substitution rule, see e.g.~\cite{BHP} for a detailed study of this connection.
\begin{figure}[htb]
\begin{center}
\begin{minipage}[b]{\textwidth}
\center{\epsfig{file=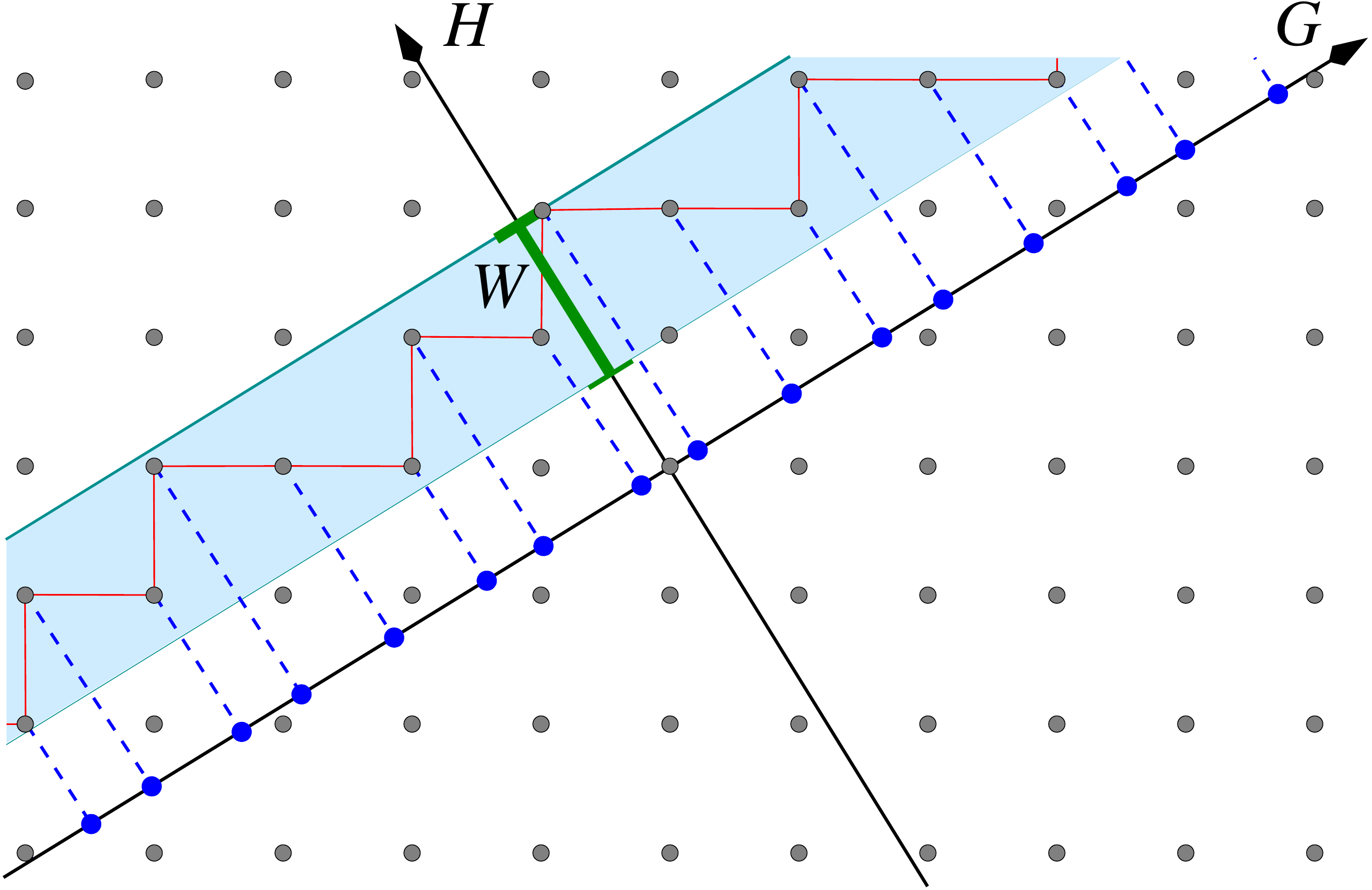,width=6cm}}
\end{minipage}
\end{center}
\caption{Cut-and-project construction of a Fibonacci set.}
\label{fig:Fibo}
\end{figure}

Given a lattice $\cL$ in superspace $G\times H$, a point set in physical space $G$ is obtained by projecting all lattice points inside a strip parallel to $G$. The space $H$ is called the \textit{internal space}, and the strip may be described by a \textit{window} $W\subseteq H$. Here $G$ has irrational slope $1/\tau$ with respect to the square lattice. The window is a half-open interval having the length of a projected unit square. Hence exactly two neighbour point distances emerge. The diffraction of a finite sample of the resulting point set is plotted in the right panel of Figure~\ref{fig:diff}. It closely resembles the diffraction of the full Fibonacci set. In fact the diffraction of a full Fibonacci set is the same for any shift of the underlying lattice and hence for any shift of the window. As indicated in the figure, the Bragg peak positions $(m,n)$ lie in  $c\cdot\mathbb Z[\tau]:=\{c\cdot(m\tau+n) : m,n\in\mathbb Z\}$. The value $c=\tau/\sqrt{2+\tau}$ is evaluated in Section~\ref{sec:fib}. It is perhaps surprising that this simple model matches the diffraction along the horizontal in the left panel of Figure~\ref{fig:diff}. For a partial explanation, note that icosahedral symmetry is a strong requirement for the cut-and-project construction. It is possible to implement such a symmetry in $G=H=\mathbb R^3$, and then in one-dimensional sections two length scales of ratio $\tau$ naturally appear \cite[Sec.~7.4]{BG2}.

Levine and Steinhardt \cite{LS84} extended de Bruijn's grid method to construct a three-dimensional space filling, which matches the diffraction in Figure~\ref{fig:diff} and the diffraction along sixfold directions. Elser \cite{E} suggested using the cut-and-project sets of Kramer and Neri for quasicrystal analysis. In fact at the same time the cut-and-project method was re-discovered by Kalugin, Kitaev and Levitov \cite{KKL} and by Duneau and Katz \cite{DK}, see also  the discussion in \cite{GR}.
It was later remarked by Lagarias \cite{L} that cut-and-project sets already appeared in Meyer's work on harmonious sets \cite{me72} as so-called model sets. Meyer's work has then been taken up and advocated particularly by Moody \cite{RVM3}, and this approach has been quite influential ever since.

\section{Pure point diffractivity}
There are two equivalent ways for computing the intensity function $\boldsymbol s \mapsto |F(\boldsymbol s)|^2$, which may be summarised in a so-called Wiener diagram.
\begin{displaymath}
\begin{CD}
\omega @>*>> \omega*\widetilde{\omega}\\
@V{\mathcal F}VV @VV{\mathcal F}V\\
{\widehat{\omega}} @>{|\cdot|^2}>> \widehat{\omega}\cdot \overline{\widehat{\omega}}
\end{CD}
\end{displaymath}
The atom positions are described by the (finite) Dirac comb $\omega=\sum_{\boldsymbol p\in \Lambda}\delta_{\boldsymbol p}=:\delta_\Lambda$. We often view the measure $\omega$ as a linear functional via $\omega(f)=\sum_{\boldsymbol p\in \Lambda}f(\boldsymbol p)$. The method described above appears in the lower left path of the diagram: First the structure factor is computed as the Fourier transform $\widehat \omega$, which we regard as a measure of the group $\widehat G$ dual to $G$, see Appendix~\ref{app:B}. As we may identify $\widehat{\mathbb R^d}$ and $\mathbb R^d$, see Appendix~\ref{app:LCA}, we may thus identify $\widehat\omega$ and the function $\boldsymbol s\mapsto F(\boldsymbol s)$ in this case, compare Remark \ref{rem:FT}~(iv). Next, the squared modulus  $\left| \widehat{\omega} \right|^2= \widehat{\omega}\cdot \overline{\widehat{\omega}}$ of the function $\widehat \omega$ is taken. Due to the convolution theorem, one might alternatively first compute the (finite) so-called autocorrelation measure $\omega * \widetilde{\omega}$, whose mass function is also called the Patterson function \cite{C}. Here the reflected measure $\widetilde{\omega}$ is defined as $\widetilde{\omega}(f)=\overline{\omega(\widetilde f)}$, and  $\widetilde f(x)=\overline{f(-x)}$.  The $*$-symbol denotes convolution of measures, see Appendix~\ref{app:B}, which corresponds to convolution of functions in this case. The Fourier transform of  $\omega * \widetilde{\omega}$ gives the intensity function.

For a (finite) specimen, the diffraction intensity is a continuous function. Pure point diffractivity is defined on the infinite idealisation of the specimen and arises as follows. In the limit of infinite sample size and after suitable normalisation of the intensity function, pure point diffraction will manifest itself in a discontinuous limiting function vanishing almost everywhere, up to some countable set of Bragg peak positions. For a mathematical description, we consider an infinite idealisation $\omega$ and compute diffraction on restrictions  $\omega_n$ to balls $B_n$ of radius $n$, which we may assume to be centered at the origin for simplicity.  We then take the limit $n\to\infty$ of the normalised diffraction intensities  $\frac{1}{\mathrm{\theta}(B_n)}\widehat{\omega_n}\cdot \overline{\widehat{\omega_n}}$.

Taking the limit after the Fourier transform is often computionally more difficult than taking the transform on an infinite object. Both operations commute when working with so-called positive definite measures, as the transform is continuous on positive definite measures, see \cite[Thm.~4.16]{BF} and \cite[Lemma~1.26]{MoSt}. In particular for a lattice Dirac comb $\omega=\delta_\cL$, which is positive definite, one may thus take the lower left path on the infinite lattice itself, which yields
\begin{displaymath}
\mathcal F\left(\delta_\cL\right)=\mathrm{dens}(\cL)\cdot \delta_{\cL^0}
\end{displaymath}
by the Poisson Summation Formula. Here $\cL^0$ is the lattice dual to $\cL$, and $\mathrm{dens}(\cL)$ denotes the density of lattice points, see Appendix~\ref{app:LCA}. We can read off that the diffraction is pure point as the limiting intensity measure is supported on the dual lattice and hence a point measure.  But also the (normalised) autocorrelation exists as the vague limit%
\footnote{Vague convergence means $\omega_n(f)\to \omega(f)$ for all continuous compactly supported functions $f$.
}
\begin{displaymath}
\gamma_\omega :=\omega \circledast \widetilde \omega :=\lim_{n\to\infty} \frac{1}{\mathrm{\theta}(B_n)} \omega_n*\widetilde{\omega_n}=\mathrm{dens}(\cL)\cdot \delta_{\cL}\ .
\end{displaymath}
Hence by continuity of the Fourier transform we have
\begin{displaymath}
\lim_{n\to\infty}\frac{1}{\mathrm{\theta}(B_n)}\widehat{\omega_n}\cdot \overline{\widehat{\omega_n}}= \lim_{n\to\infty}\mathcal F\left( \frac{1}{\mathrm{\theta}(B_n)} \omega_n*\widetilde{\omega_n}\right)=\mathcal F\left(\gamma_\omega\right)=\mathrm{dens}(\cL)^2\cdot \delta_{\cL^0}\ .
\end{displaymath}
This shows that the two approaches for  computing diffraction coincide when transforming the limits. It also shows that the diffraction of the infinite idealisation can be inferred from a finite sample if the sample size is sufficiently large. The lattice case will be discussed in the following section. We will analyse it beyond Euclidean space, as this setting is relevant for cut-and-project sets.

Dirac combs $\omega$ of cut-and-project sets need not be positive definite. However the positive definite autocorrelation measure will exist for so-called regular model sets. Hence in this case one can analyse diffraction using the upper right part in the Wiener diagram.  This will be discussed in Section~\ref{sec:ms}. In fact convergence of the finite sample diffraction measures is uniform in the center of the balls, which makes it possible to infer the diffraction experimentally from finite samples.  We will not consider the lower left part in the Wiener diagram in this article, as the  Fourier transform $\widehat \omega$ may only exist as a tempered distribution \cite{LO}. In fact, if $\omega$ is a cut-and-project set, the Fourier transform $\widehat \omega$ exists in measure sense only when $\omega$ is a small deformation of a fully periodic crystal \cite{LS2}. This complicates the mathematical analysis, especially beyond Euclidean internal space.

\section{Lattice diffraction}\label{ldiff}

In this section, we re-derive the diffraction formula for a lattice, and we show that the two approaches of deriving the diffraction formula coincide. A detailed analysis in the Euclidean setting is given in \cite[Sec.~9.2]{BG2}. We adopt the terminology and notation described in the appendix. Consider a lattice $\cL$ in some (compactly generated) LCA group $G$. Then its Dirac comb $\delta_\cL\in \mathcal M^\infty(G)$ is a translation bounded positive and positive definite measure, see Appendix \ref{app:B} for definitions. The same is true of the Dirac comb $\delta_{\cL^0}\in \mathcal M^\infty(\widehat G)$ of the dual lattice.

The following result is known as the Poisson Summation Formula for a lattice, see e.g.~\cite[Thm.~5.5.2]{Rei2}. It extends to Haar measures on closed subgroups \cite[Thm.~6.19]{BF}.

\begin{proposition}[lattice PSF]\label{psflattice} Assume that $\cL$ is a lattice in $G$. We then have
\[
\langle \delta_{\cL}, f\rangle =\mathrm{dens}(\cL) \cdot \langle \delta_{\cL^0}, \widecheck{f}  \rangle
\]
for all $f \in PK(G)$.
\end{proposition}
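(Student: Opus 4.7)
The plan is to establish the lattice Poisson summation formula by periodizing on the compact quotient $G/\cL$ and applying Fourier inversion there, then transporting the result back through the canonical identification of $\widehat{G/\cL}$ with the annihilator $\cL^0 \subset \widehat G$.

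First I would fix $f \in PK(G)$ and form the periodization $F(\pi(x)) := \sum_{\lambda \in \cL} f(x+\lambda)$ where $\pi \colon G \to G/\cL$ is the quotient map. By the construction of the space $PK(G)$ recalled in the appendix, together with the fact that $\cL$ is discrete and $\widecheck f \in C_c(\widehat G)$, this series converges absolutely to a continuous function on the compact group $G/\cL$. For each $\chi \in \cL^0$, which is by definition trivial on $\cL$ and therefore descends to a character of $G/\cL$, Weil's quotient integration formula yields
\[
\widehat F(\chi) \;=\; \int_{G/\cL} F(y)\,\overline{\chi(y)}\,dy \;=\; \int_G f(x)\,\overline{\chi(x)}\,d\theta(x) \;=\; \widecheck f(\chi).
\]
Since $\cL^0$ is a discrete subgroup of $\widehat G$ and $\widecheck f$ has compact support, only finitely many of these Fourier coefficients are nonzero, so the Fourier series of $F$ is a trigonometric polynomial on $G/\cL$ and converges pointwise to $F$ by elementary Fourier theory on the compact abelian group $G/\cL$.

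Evaluating at the identity coset, $F(0) = \sum_{\lambda \in \cL} f(\lambda) = \langle \delta_\cL, f \rangle$, while Fourier inversion on $G/\cL$ gives $F(0) = \theta(G/\cL)^{-1}\sum_{\chi \in \cL^0} \widecheck f(\chi)$. The normalization $\mathrm{dens}(\cL) = 1/\theta(G/\cL)$ then rewrites the right-hand side as $\mathrm{dens}(\cL)\cdot \langle \delta_{\cL^0}, \widecheck f\rangle$, which is the claimed identity.

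The principal subtlety I expect is the bookkeeping of Haar measure normalizations: one must align Weil's quotient formula, which pins down the relation between the Haar measures on $G$, $\cL$, and $G/\cL$, with Pontryagin duality for the pairs $(G, \widehat G)$ and $(G/\cL, \cL^0)$, so that the reciprocal covolume of a fundamental domain emerges with the correct constant. A secondary but delicate point is justifying the absolute convergence of the periodization and the pointwise Fourier inversion at the identity in the general (compactly generated) LCA setting; neither is automatic, but for $f \in PK(G)$ both are guaranteed by the standing hypotheses collected in the appendix, and the compactness of $\mathrm{supp}(\widecheck f)$ reduces the Fourier side to a finite sum.
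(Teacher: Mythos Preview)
Your overall route---periodize $f$ over $\cL$, compute Fourier coefficients on the compact quotient via Weil's formula, and invert---is the classical textbook proof and is genuinely different from the paper's argument, which deliberately avoids the quotient and instead builds the transform measure $\mu$ from Bochner's theorem applied to $f^\dag*\delta_\cL$, then identifies $\mu$ as a Haar measure on $\cL^0$ by exploiting the $\cL$- and $\cL^0$-invariances directly.

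However, your argument contains a real gap. You repeatedly assert that $\widecheck f\in C_c(\widehat G)$, and you use this to conclude that $F$ has only finitely many nonzero Fourier coefficients, so that Fourier inversion is trivial. This is false: for nonzero $f\in PK(G)$ the inverse transform $\widecheck f$ is a nonnegative continuous function in $C_0(\widehat G)$, but it essentially never has compact support. Already for $G=\mathbb R$, a nonzero compactly supported $f$ has $\widecheck f$ extending to an entire function, so $\widecheck f$ cannot vanish on any open set. Hence the sum $\sum_{\chi\in\cL^0}\widecheck f(\chi)$ is genuinely infinite in number of terms, and the pointwise Fourier inversion step on $G/\cL$ needs an honest justification of absolute convergence.

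The periodization itself is fine (it is a locally finite sum because $f$ has compact support and $\cL$ is uniformly discrete; your appeal to $\widecheck f\in C_c(\widehat G)$ is irrelevant there), and the Weil computation of the coefficients is correct. To repair the inversion step you must show $\sum_{\chi\in\cL^0}|\widecheck f(\chi)|<\infty$. One clean way: prove that the periodization $F$ is itself continuous and positive definite on the compact group $G/\cL$ (this follows from positive definiteness of $f$ on $G$ by an averaging/F{\o}lner argument over finite subsets of $\cL$), and then Bochner on $G/\cL$ gives that its Fourier coefficients form a finite positive measure on the discrete dual $\cL^0$, i.e.\ are nonnegative and summable. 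With that in hand your evaluation at the identity goes through. Note that this fix reintroduces Bochner's theorem, so in the end both approaches lean on the same analytic input; the paper's version just applies Bochner on $G$ rather than on $G/\cL$ and reads off the support and invariance of the transform directly.
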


For the above statement, note that we have $\langle \delta_{\cL},f\rangle=\sum_{\ell\in\cL}f(\ell)$ as explained in the appendix. On the rhs, $\mathrm{dens}(\cL)$ is the density of points in $\cL$, see Appendix~\ref{app:LCA}, and $\widecheck{f}$ denotes the inverse Fourier transform of $f$, see Appendix~\ref{app:pdfctns}.
The set $PK(G)$, see Definition~\ref{def:pdfctn}, consists of all positive definite continuous compactly supported functions $f: G\to\mathbb C$. 

\begin{remark}
The measure $\mathrm{dens}(\cL) \cdot  \delta_{\cL^0}$ is uniquely determined by the above equations \cite[Thm.~2.2]{ARMA1}. It is called the Fourier transform $\widehat{\delta_\cL}\in \mathcal M^\infty(\widehat G)$ of $\delta_\cL\in \mathcal M^\infty(G)$.
\end{remark}

For the ease of the reader, we will give an elementary proof that is based on Bochner's theorem and on an explicit analysis of the periodicity properties of the lattice.

\begin{proof}
\emph{Step 1:} We prove that there exists a positive measure $\mu$ on $\widehat{G}$ such that for all $f \in PK(G)$ we have $\widecheck{f} \in L^1(\mu)$ and $\langle \delta_{\cL}, f  \rangle = \langle \mu, \widecheck{f} \rangle$.

Consider arbitrary $f \in PK(G)$ and note $f^\dag*\delta_{ \cL} \in PK(G)$ by Lemma~\ref{pdpk}. Here, $f^\dag$ denotes the function $f^\dag(x)=f(-x)$, see Appendix~\ref{app:pdfctns}. For convolution between an integrable function and a measure, see the end of 
Appendix~\ref{app:B}.1.
By Bochner's Theorem \ref{bochthm}, there exists a positive finite measure $\sigma_f$ such that
\[
f^\dag *\delta_{\cL}(x) = \int_{\widehat G} \chi(x) {\rm d} \sigma_f(\chi)=:\widecheck{\sigma_f}(x) \,
\]
 for all $x\in G$.
%
We can now define a positive linear functional $\mu$ on $C_c(\widehat G)$ as follows: given $\phi \in C_c(\widehat{G})$ and compact $K\supseteq \supp(\phi):=\overline{\{g\in G: |\phi(g)|\ne0\}}$,  pick $f \in PK(G)$ such that $\widecheck{f} (\chi) \ge 1$ on $K$, compare \cite[Prop.~2.4]{BF}.
Now, the definition
\[
\langle \mu, \phi \rangle = \langle \sigma_f,  \frac{\phi}{\widecheck{f}}\rangle
\]
does not depend on the choice of $f$. Indeed, for $g\in PK(G)$  we have $\widecheck{g}\cdot \sigma_f = \widecheck{f} \cdot\sigma_g$ by Remark~\ref{expdf2} (iii). Moreover, if $\phi \in C_c(G)$ is non-negative, we have $\langle \mu, \phi \rangle \geq 0$, and therefore $\mu$ is a measure by Lemma \ref{positive implies measure}.
Finally the definition of $\mu$ gives for any $f\in PK(G)$ that $\widecheck{f} \mu = \sigma_f$. Since $\sigma_f$ is a finite measure, we get that $\widecheck{f} \mu$ is finite and hence $\widecheck{f} \in L^1(\mu)$.
Therefore we have
\begin{displaymath}
\langle \delta_{\cL}, f \rangle   = f^\dagger *\delta_{\cL}(0) = \widecheck{\sigma_f}(0) =  \widecheck{\check{f}\mu }(0) = \langle \mu, \widecheck{f} \rangle \ .
\end{displaymath}
As $f\in PK(G)$ was arbitrary, this proves Step 1.

\noindent \emph{Step 2:} We show that $\mu$ is a Haar measure on $\cL^0$.

Let $f \in PK(G)$ be arbitrary. Then, for all $\chi \in \cL^0$ we have $\chi \delta_{\cL} =\delta_{\cL}$ and hence
\begin{displaymath}
\langle T_{\chi}\mu, \widecheck f \rangle = \langle \mu, T_{-\chi} \widecheck f \rangle
= \langle \mu,  \widecheck{\chi f} \rangle = \langle \delta_\cL, \chi f\rangle=
\langle \delta_\cL, f\rangle = \langle \mu, \widecheck f\rangle \ ,
\end{displaymath}
where $T_{\chi}$ denotes translation by $\chi$.
This shows that $\mu-T_{-\chi}\mu$ vanishes on a subset of $C_0(G) \cap L^1(\mu)$ which is dense in $C_0(G)$, the set of continuous functions on $G$ vanishing at infinity. Hence we have $\mu=T_{-\chi} \mu$. Therefore $\mu$ is $\cL^0$-invariant. Similarly consider $x \in \cL$. Then $T_x \delta_{\cL} =\delta_{\cL}$ and hence
\begin{displaymath}
\langle x\mu, \widecheck f \rangle = \langle \mu, x \widecheck f \rangle
= \langle \mu,  \widecheck{T_x f} \rangle = \langle \delta_\cL, T_x f\rangle=
\langle T_{-x}\delta_\cL, f\rangle = \langle \delta_\cL, f\rangle=\langle \mu, \widecheck f\rangle \ .
\end{displaymath}
Note that in $x\mu$ we use the function $x:\widehat G\to \mathbb C$ associated to $x\in G$, which is defined via $x(\chi)=\chi(x)$ for all $\chi\in \widehat G$.
The above equation shows that $x \mu =\mu$ for all $x\in \cL$, which implies $\supp(\mu)\subset \cL^0$. Indeed, take any $\chi\notin \cL^0$. Then there exists $x\in (\cL^0)^0=\cL$ such that $x(\chi)=\chi(x)\ne 1$, which implies $\mu(U_\chi)=0$ in some neighborhood $U_\chi$ of $\chi$. Therefore $\mu$ is supported on $\cL^0$ and is $\cL^0$-invariant. Thus $\mu$ is a Haar measure on $\cL^0$.

\noindent \emph{Step 3:} We evaluate the normalisation constant in $\mu=C \delta_{\cL^0}$ as
\[
C= C \delta_{\cL^0} (\{ 0\}) = \mu(\{0 \}) =\dens (\cL) \ .
\]

For the latter equality, choose an averaging sequence $(A_n)_{n \in \mathbb N}$ in $G$ and take $\varphi=\psi*\widetilde \psi$ where $\psi\in C_c(G)$ such that $\int_G \varphi=1$. We then define $f_n: G\to \mathbb R$ by
\begin{displaymath}
f_n=\varphi*\frac{1}{\theta_G(A_n)} 1_{A_n} \ ,
\end{displaymath}
where $1_A$ denotes the characteristic function of the set $A$.
Note that $\langle \delta_\cL, f_n\rangle$ is a smoothed version of the relative frequency of lattice points in $A_n$, and it is easy to see that $\langle \delta_\cL, f_n\rangle\to \mathrm{dens}(\cL)$ as $n\to\infty$.  We have $f_n\in \mathrm{span}(PK(G))$, the set of linear combinations of functions from $PK(G)$. This may be seen using the polarisation identity (see e.g.~\cite[Prop.~1.9.4]{MoSt} or \cite[Rem.~3.1.2]{P89}). One can also show $\widecheck{f_n}(\chi)\to \delta_{\chi, e}$ as $n\to\infty$, compare \cite[Lemma~3.14]{RS2}. Here $\delta_{\chi,e}$ equals one if $\chi$ is the trivial character, and zero otherwise.  Moreover $\widecheck \varphi$ is an integrable majorant of $\widecheck{f_n}$.  Now we can use Step 1 and dominated convergence to infer that $\langle \delta_\cL, f_n \rangle=\langle \mu, \widecheck f_n\rangle \to \mu(\{0\})$ as $n\to\infty$. For details of the argument, see also  \cite[Prop.~3.12]{RS2}, or \cite[Thm.~3.2]{Hof1} in the Euclidean setting.
\end{proof}
The autocorrelation of a lattice Dirac comb has a simple form.
\begin{proposition}[Lattice autocorrelation]\label{gamlat}
Let $\mathcal L$ be a lattice in $G$. Then the positive and positive definite measure $\delta_\cL\in \mathcal M^\infty(G)$ satisfies
\begin{displaymath}
\gamma_{\delta_\cL}=\delta_\cL \circledast {\widetilde \delta_\cL}=\mathrm{dens}(\cL) \cdot  \delta_\cL \ .
\end{displaymath}
\end{proposition}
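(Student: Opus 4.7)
The plan is to unpack the definition of the vague autocorrelation limit and compute it directly, exploiting the fact that the convolution of two lattice Dirac combs (restricted to balls) is again a sum of point masses over lattice differences, whose multiplicities can be tracked explicitly.

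First I would record two elementary observations. Since $\cL$ is a subgroup of $G$, we have $-\cL = \cL$, so $\widetilde{\delta_\cL} = \delta_\cL$; in particular the finite restrictions satisfy $\widetilde{\delta_{\cL\cap B_n}} = \delta_{-(\cL\cap B_n)} = \delta_{\cL\cap (-B_n)}$. Assuming the averaging sequence of balls $B_n$ is symmetric (as is standard and may be arranged WLOG), this equals $\delta_{\cL\cap B_n}$. Then a direct expansion gives
\begin{displaymath}
\delta_{\cL\cap B_n} * \widetilde{\delta_{\cL\cap B_n}} \;=\; \sum_{\ell\in\cL} c_n(\ell)\,\delta_\ell,
\qquad c_n(\ell) \;=\; \#\bigl(\cL\cap B_n \cap (B_n-\ell)\bigr),
\end{displaymath}
since for each $\ell\in\cL$ the coefficient counts the pairs $(\ell_1,\ell_2)\in(\cL\cap B_n)^2$ with $\ell_1-\ell_2=\ell$.

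Next I would test against an arbitrary $f\in C_c(G)$ with $\supp(f)=:K$. Since $\cL$ is uniformly discrete, only the finite set $\cL\cap K$ contributes to $\bigl(\omega_n*\widetilde{\omega_n}\bigr)(f)$, and therefore
\begin{displaymath}
\frac{1}{\theta(B_n)}\bigl(\omega_n*\widetilde{\omega_n}\bigr)(f)
\;=\; \sum_{\ell\in\cL\cap K} f(\ell)\,\frac{c_n(\ell)}{\theta(B_n)}.
\end{displaymath}
For each fixed $\ell$, the symmetric difference $B_n\,\triangle\,(B_n-\ell)$ is negligible compared with $\theta(B_n)$ (this is the Følner property of an averaging sequence), so $c_n(\ell)/\theta(B_n)$ has the same limit as $\#(\cL\cap B_n)/\theta(B_n)$, which equals $\mathrm{dens}(\cL)$ by definition (see Appendix~\ref{app:LCA}). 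Summing over the finite set $\cL\cap K$ yields
\begin{displaymath}
\lim_{n\to\infty}\frac{1}{\theta(B_n)}\bigl(\omega_n*\widetilde{\omega_n}\bigr)(f)
\;=\; \mathrm{dens}(\cL)\sum_{\ell\in\cL\cap K}f(\ell)
\;=\; \mathrm{dens}(\cL)\cdot\delta_\cL(f),
\end{displaymath}
which is the claimed vague limit. Positivity and positive definiteness of $\gamma_{\delta_\cL}$ are then automatic: positivity because each $\omega_n*\widetilde{\omega_n}$ is positive, and positive definiteness because autocorrelations of measures are always positive definite (alternatively, $\mathrm{dens}(\cL)\cdot\delta_\cL$ is recognised as a positive multiple of a lattice Dirac comb, which is positive definite, e.g.\ via Proposition~\ref{psflattice}).

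The main obstacle is the per-$\ell$ averaging step $c_n(\ell)/\theta(B_n)\to \mathrm{dens}(\cL)$. On $\RR^d$ with Euclidean balls this is a straightforward boundary estimate, but in a general compactly generated LCA group one must invoke the Følner property of the averaging sequence $(B_n)$ and verify that the translation by a fixed $\ell\in G$ does not change the limiting density; uniform discreteness of $\cL$ (which bounds $c_n(\ell)\le \#(\cL\cap B_n)$) makes dominated convergence applicable and lets us interchange the finite sum over $\cL\cap K$ with the limit.
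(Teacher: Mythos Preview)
Your proof is correct and follows essentially the same route as the paper's: expand $\delta_{\cL\cap B_n}*\widetilde{\delta_{\cL\cap B_n}}$ as $\sum_{\ell\in\cL}c_n(\ell)\,\delta_\ell$ with $c_n(\ell)$ a shifted-intersection count, then use the F{\o}lner/van Hove property to replace each $c_n(\ell)/\theta(B_n)$ by $\mathrm{dens}(\cL)$ and pass to the vague limit. The paper's version is just terser (it writes the $o(1)$ estimate directly rather than testing against $f\in C_c(G)$), and your dominated-convergence remark is unnecessary since $\cL\cap K$ is finite, but the argument is the same.
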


\begin{proof}
Take an averaging sequence $(A_n)_{n\in\mathbb N}$ in $G$. With $\mathrm{card}(A)$ denoting the cardinality of the set $A$,  we then have
\begin{displaymath}
\begin{split}
\gamma_n&= \frac{1}{\theta_G(A_n)} \, \delta_\cL|_{A_n} * \widetilde{ \delta_\cL|_{A_n}}=
\frac{1}{\theta_G(A_n)} \, \delta_\cL|_{A_n} * \delta_\cL|_{-A_n}\\
&= \sum_{z\in \cL} \frac{\mathrm{card}(\cL \cap A_n \cap (z+A_n))}{\theta_G(A_n)} \delta_z
= \sum_{z\in \cL} \frac{\mathrm{card}(\cL \cap A_n)}{\theta_G(A_n)} \delta_z +o(1)\\
&= \mathrm{dens}(\cL) \cdot  \delta_\cL +o(1)
\end{split}
\end{displaymath}
as $n\to\infty$ with respect to vague convergence.
Hence $\gamma_{\delta_\cL}=\lim_{n\to\infty} \gamma_n=\mathrm{dens}(\cL) \cdot  \delta_\cL$.
\end{proof}

\begin{remark}[Lattice Wiener diagram]\label{genWien}
The above results can be summarised in a generalised Wiener diagram for a lattice Dirac comb.
\begin{displaymath}
\begin{CD}
\omega=\delta_\cL @>\circledast>> \gamma_\omega=\mathrm{dens}(\cL) \cdot \delta_\cL\\
@V{\mathcal F}VV @VV{\mathcal F}V\\
\widehat{\omega}=\mathrm{dens}(\cL)\cdot \delta_{\cL^0} @>{|\cdot|^2}>> \widehat{\gamma_\omega}=\mathrm{dens}(\cL)^2\cdot \delta_{\cL^0}
\end{CD}
\end{displaymath}
\medskip

The Wiener diagram expresses that the two approaches indicated in the previous section for computing the diffraction are equivalent.
This will no longer be the case for Dirac combs of non-periodic quasicrystals, when we regard the objects in the above diagram as measures.
\end{remark}

\section{Diffraction of regular model sets}\label{sec:ms}

We formalise the setting of Section~\ref{sec:cpsets}. Let $G=\mathbb R^d$ and let $H$ be a compactly generated LCA group.
We write $\pi^G: G\times H\to G$, $\pi^H: G\times H \to H$ for the canonical projections. Given a lattice $\cL$ in $G\times H$, the triple
 $(G, H, \cL)$ is called a \textit{cut-and-project scheme} if (i) $\pi^G$ is one-to-one on $\cL$, and if (ii) $\pi^H(\cL)$  is dense in $H$. Let us call these
two conditions the \textit{projection assumptions}.

\begin{remark}
\begin{itemize}
\item[(i)] Given $(G, H, \cL)$, we have that $(\widehat{G}, \widehat{H},  \cL^0)$ is also a cut-and-project scheme.  Indeed, $\pi^{\widehat G}$ is one-to-one on $\cL^0$ if and only if $\pi^H(\cL)$ is dense in $H$, and $\pi^{\widehat H}(\cL^0)$ is dense in $\widehat H$ if and only if $\pi^G$ is one-to-one on $\cL$. This is a consequence of  Pontryagin duality, see e.g.~\cite[Sec.~5]{RVM3}.
\item[(ii)] In the Euclidean setting $G=\mathbb R^d$ and $H=\mathbb R^n$, assume that $\cL$ is a rotated scaled copy of $\mathbb Z^{d+n}$. Then either of the two projection assumptions implies the other by duality, compare Figure~\ref{fig:Fibo}.
\end{itemize}
\end{remark}

\begin{definition}[model set] \rm  Let a cut-and-project scheme $(G, H, \cL)$ and a \textit{regular window} $W\subseteq H$ be given, i.e., a relatively compact and measurable set with non-empty interior such that $\theta_H(\partial W)=0$. Then $\oplam(W)=\pi^G(\cL\cap (G\times W))$ is called a \textit{regular model set}. If $W$ is relatively compact and measurable, then $\oplam(W)$ is called a \textit{weak model set}.
\end{definition}

\begin{remark}\label{rmsrem}
\begin{itemize}
\item[(i)] Since for a given weak model set $\oplam(W)$ we can replace $H$ by the group generated by $W$, the assumption that $H$ is compactly generated is no restriction.
\item[(ii)] Since we may pass from $H$ to $\overline{\pi^H(\cL)}$, even if the second projection assumption does not hold we can assume that $\pi^H(\cL)$ is dense in $H$ without loss of generality. A prominent example where the second projection assumption is violated is the Penrose point set, when projected from $G\times H=\mathbb R^{2+3}$ and $\cL$ a rotated copy of $\mathbb Z^5$, compare \cite{dBr} and \cite[Rem.~7.8]{BG2}. Another one-dimensional example is the period doubling point set, which will be discussed in Section~\ref{sec:pds}.
\end{itemize}
\end{remark}

 For $h:H\to\mathbb C$ bounded and compactly supported, consider the weighted Dirac comb $\omega_h\in \mathcal M^\infty(G)$ defined by
\begin{displaymath}
\omega_h= \sum_{(x,y)\in \cL} h(y) \delta_x\ .
\end{displaymath}
If $\pi^G$ is one-to-one on $\cL$, we may identify $\oplam(W)$ with $\omega_{1_W}$.
In the sequel we compute the diffraction of measures $\omega_h$ for suitable weight functions $h$.
The following theorem is the model set analogue of the lattice PSF. Its proof is a simple application of the lattice PSF. Recall that $PK(G)$ denotes the set of positive definite and continuous compactly supported functions $G\to\mathbb C$, compare Appendix~\ref{app:pdfctns}.

\begin{theorem}\label{maintheo1} Let $(G, H, \cL)$ be a cut-and-project scheme and $h \in PK(H)$. Then $\omega_{\widecheck h}\in \mathcal M(\widehat G)$ is a positive measure, and we have
\begin{displaymath}
\langle \omega_h, g \rangle =\dens(\cL) \cdot  \langle \omega_{\widecheck{h}}, \widecheck g \rangle
\end{displaymath}
 for all $g \in PK(G)$. This result holds without the projection assumptions on $(G,H,\cL)$.
\end{theorem}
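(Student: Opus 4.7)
The plan is to lift the problem to the product group $G\times H$ and apply the lattice Poisson Summation Formula (Proposition~\ref{psflattice}) to $\cL\subset G\times H$, using the tensor product $F(x,y):=g(x)h(y)$ as the test function.

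First I would verify that $F\in PK(G\times H)$. Continuity and compact support are inherited from $g$ and $h$. For positive definiteness, Bochner's theorem (Theorem~\ref{bochthm}) provides positive finite measures $\sigma_g$ on $\widehat G$ and $\sigma_h$ on $\widehat H$ with $g=\widecheck{\sigma_g}$ and $h=\widecheck{\sigma_h}$; then $F=\widecheck{\sigma_g\otimes \sigma_h}$, and $\sigma_g\otimes\sigma_h$ is a positive finite measure on $\widehat{G\times H}=\widehat G\times\widehat H$. Fubini also gives $\widecheck F(\chi,\eta)=\widecheck g(\chi)\widecheck h(\eta)$.

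Applying Proposition~\ref{psflattice} to $\cL$ and $F$ then yields
\[
\sum_{(x,y)\in\cL} g(x)h(y) \;=\; \dens(\cL)\sum_{(\chi,\eta)\in\cL^0} \widecheck g(\chi)\widecheck h(\eta),
\]
where both sums converge (the left one trivially, by compact support of $g$ and $h$). The left side equals $\langle\omega_h,g\rangle$ by definition of $\omega_h$, and the right side reads as $\dens(\cL)\cdot\langle\omega_{\widecheck h},\widecheck g\rangle$ once $\omega_{\widecheck h}$ is recognised as a positive measure. Notice that the projection assumptions never enter, consistent with Proposition~\ref{psflattice} holding for an arbitrary lattice.

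The step requiring actual care is the verification that $\omega_{\widecheck h}=\sum_{(\chi,\eta)\in\cL^0}\widecheck h(\eta)\delta_\chi$ is a locally finite positive measure on $\widehat G$. Positivity of the coefficients $\widecheck h(\eta)\geq 0$ follows from Bochner applied to $h\in PK(H)$. Local finiteness does not come from compact support of the weight, since $\widecheck h$ is in general only bounded and integrable but not compactly supported; instead I would bootstrap from the PSF identity itself. Given a compact $K\subseteq \widehat G$, choose $g_K\in PK(G)$ with $\widecheck{g_K}\geq 1_K$ (such $g_K$ exist, compare the argument in the proof of Proposition~\ref{psflattice} and \cite[Prop.~2.4]{BF}). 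Then
\[
\omega_{\widecheck h}(K) \;\leq\; \sum_{(\chi,\eta)\in\cL^0}\widecheck{g_K}(\chi)\widecheck h(\eta) \;=\; \dens(\cL)^{-1}\langle\omega_h,g_K\rangle \;<\;\infty.
\]
This bootstrap is the only genuine obstacle in the argument; everything else is bookkeeping around the lattice PSF.
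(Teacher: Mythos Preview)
Your proposal is correct and follows essentially the same route as the paper: apply the lattice PSF (Proposition~\ref{psflattice}) to the tensor test function $g\odot h\in PK(G\times H)$ (the paper invokes Example~\ref{expdf2}(v) for this, which is precisely your Bochner/tensor argument), and then establish that $\omega_{\widecheck h}$ is a positive Radon measure by dominating an arbitrary $f\in C_c(\widehat G)$ by some $\widecheck{g_K}$ with $g_K\in PK(G)$ and feeding the resulting bound back through the PSF identity. The only cosmetic difference is that the paper phrases the last step as defining a positive linear functional on $C_c(\widehat G)$ and invoking Lemma~\ref{positive implies measure}, whereas you phrase it as a direct local-finiteness estimate on compacta; these are the same argument.
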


\begin{remark}
\begin{itemize}
\item[(i)] The measure $\omega_{\widecheck{h}}$ in the above equation is uniquely determined \cite[Thm.~2.2]{ARMA1}. It is called the Fourier transform  $\widehat{\omega_h}\in\mathcal M^\infty(\widehat G)$ of $\omega_h\in\mathcal M^\infty(G)$.
\item[(ii)] Note that the measure $\omega_{\widecheck{h}}$ has typically dense support, i.e., its Dirac point measures with positive amplitude lie dense in $\widehat G$. This makes such dense Dirac combs interesting, and they have been systematically studied in \cite{CR, LR, NS11}.
\end{itemize}
\end{remark}

\begin{proof}
Fix arbitrary  $g \in PK(G)$ and $h \in PK(H)$. Denoting their pointwise product in $G\times H$ by $g\odot h$, we then have
\begin{displaymath}
\langle \omega_h, g\rangle =  \langle \delta_{\cL}, g\odot h \rangle=\mathrm{dens}(\cL)\cdot\langle \delta_{\cL^0}, \widecheck{g}\odot \widecheck{h} \rangle=\mathrm{dens}(\cL)\cdot\langle\omega_{\widecheck{h}}, \widecheck{g}\rangle.
\end{displaymath}
Here the first and the last equalities hold by definition. The second equality is Proposition~\ref{psflattice}, since $g \odot h \in PK(G \times H)$ by Example~\ref{expdf2} (v). Note that all terms above are nonnegative and finite.

Now, for each $f \in C_c(\widehat{G})$ we can find some $g \in PK(G)$ such that $|f| \leq \widecheck{g}$, compare \cite[Prop.~2.4]{BF}. Thus $|f| \odot \widecheck{h} \in L^1( \delta_{\cL^0})$. This allows us to define a linear functional $\omega_{\widecheck{h}} : C_c(\widehat{G}) \to \CC$ via $\omega_{\widecheck{h}}(f)  = \langle \delta_{\cL^0}, f\odot \widecheck{h} \rangle$,
which is positive by the positivity of $\widecheck{h}$ and of $\delta_{\cL^0}$, and hence a measure by Lemma \ref{positive implies measure}.
\end{proof}

The method of computing the normalisation constant $\mathrm{dens}(\cL)$ in Proposition~\ref{psflattice} extends to the following result. A more general statement relating to so-called Fourier-Bohr coefficients and a proof are given in \cite[Prop.~3.12]{RS2}.

\begin{lemma}\label{FBgen}
Assume that $\mu\in\mathcal M^\infty(G)$ is positive definite. Then
\begin{displaymath}
\widehat \mu(\{0\})=\lim_{n\to\infty} \frac{\mu(A_n)}{\theta_G(A_n)}=:M(\mu).
\end{displaymath}
\qed
\end{lemma}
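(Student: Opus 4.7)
The plan is to mimic the argument used to identify the normalisation constant in Step~3 of the proof of the lattice PSF: we construct a sequence of test functions $f_n$ for which $\langle\mu,f_n\rangle$ approaches the mean $M(\mu)$ while $\widecheck{f_n}$ approaches the point mass at the trivial character in a dominated manner, and then we transport the identity $\langle\mu,f_n\rangle=\langle\widehat\mu,\widecheck{f_n}\rangle$ to the limit.

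First I would set up the test functions. Fix an averaging sequence $(A_n)_{n\in\NN}$ in $G$, choose $\psi\in C_c(G)$ with $\int_G\psi*\widetilde\psi=1$, and put $\varphi=\psi*\widetilde\psi$ and
\begin{displaymath}
f_n=\varphi*\frac{1}{\theta_G(A_n)}1_{A_n}.
\end{displaymath}
By the polarisation identity $f_n\in\mathrm{span}(PK(G))$. In particular, since $\mu$ is positive definite and translation bounded, its Fourier transform $\widehat\mu\in\mathcal M^\infty(\widehat G)$ is a positive measure, and the identity
\begin{displaymath}
\langle\mu,f_n\rangle=\langle\widehat\mu,\widecheck{f_n}\rangle
\end{displaymath}
holds (exactly as in Step~1 of the proof of Proposition~\ref{psflattice}, where the same argument relating $\mu$, $\widehat\mu$ and $f\in\mathrm{span}(PK(G))$ is carried out without using anything lattice-specific).

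Next I would identify the two limits. On the spatial side, $\langle\mu,f_n\rangle$ is a smoothed version of $\mu(A_n)/\theta_G(A_n)$: writing out the convolution gives
\begin{displaymath}
\langle\mu,f_n\rangle=\frac{1}{\theta_G(A_n)}\int_G\widetilde\varphi(-t)\,\mu(A_n-t)\,\mathrm d\theta_G(t),
\end{displaymath}
and the averaging (Følner-type) property of $(A_n)$ together with translation boundedness of $\mu$ forces the difference $\langle\mu,f_n\rangle-\mu(A_n)/\theta_G(A_n)$ to tend to zero; hence $\langle\mu,f_n\rangle\to M(\mu)$. On the Fourier side, a direct computation gives
\begin{displaymath}
\widecheck{f_n}(\chi)=\widecheck\varphi(\chi)\cdot\frac{1}{\theta_G(A_n)}\widecheck{1_{A_n}}(\chi),
\end{displaymath}
and one checks that $\widecheck{1_{A_n}}(\chi)/\theta_G(A_n)\to\delta_{\chi,e}$ pointwise as $n\to\infty$, with absolute value at most $1$. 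Since $\widecheck\varphi=|\widecheck\psi|^2\ge 0$ and since $\varphi\in PK(G)$ yields $\widecheck\varphi\in L^1(\widehat\mu)$, the function $\widecheck\varphi$ is an integrable majorant for the sequence $\widecheck{f_n}$ on $(\widehat G,\widehat\mu)$.

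Finally I would close the argument by dominated convergence:
\begin{displaymath}
M(\mu)=\lim_{n\to\infty}\langle\mu,f_n\rangle=\lim_{n\to\infty}\langle\widehat\mu,\widecheck{f_n}\rangle=\widehat\mu(\{0\}).
\end{displaymath}
The main obstacle is not the Fourier side, which is essentially formal once $\widehat\mu\in\mathcal M^\infty(\widehat G)$ is in place, but rather the spatial convergence $\langle\mu,f_n\rangle\to M(\mu)$: one must exploit the averaging property of $(A_n)$ together with the uniform local finiteness provided by translation boundedness of $\mu$ to control the boundary contributions $\mu((A_n-t)\triangle A_n)/\theta_G(A_n)$ uniformly for $t$ in the (compact) support of $\widetilde\varphi$. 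This is precisely the point worked out in detail in \cite[Prop.~3.12]{RS2}, and I would simply refer to it after indicating the above construction.
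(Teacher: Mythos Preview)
Your proposal is correct and follows precisely the approach the paper indicates: the paper does not give a self-contained proof of this lemma but states that the method of Step~3 in Proposition~\ref{psflattice} extends and refers to \cite[Prop.~3.12]{RS2} for the details, which is exactly what you have spelled out. Your identification of the spatial boundary estimate as the only nontrivial point, and your deferral of that estimate to \cite[Prop.~3.12]{RS2}, matches the paper's treatment.
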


The above result is the key ingredient in our proof of the density formula for weighted model sets.

\begin{theorem}[Density formula for weighted model sets]\label{dens2}  Let $(G, H, \cL)$ be a cut-and-project scheme and $h: H\to\mathbb C$ Riemann integrable. Then
\[
M(\omega_h)= \mathrm{dens}(\cL)\cdot \int_H h(y) \, {\rm d}\theta_H(y)\ .
\]
This result uses only the second projection assumption. In particular if $W \subseteq H$ is a regular window, we have
\[
M(\oplam(W))= \mathrm{dens}(\cL)\cdot \theta_H(W) \,.
\]
\end{theorem}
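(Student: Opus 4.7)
The plan is to establish the density formula first for test functions $h \in PK(H)$, where Theorem~\ref{maintheo1} and Lemma~\ref{FBgen} combine cleanly, then extend to Riemann integrable $h$ by linearity and a sandwich argument, and finally specialise to $h = 1_W$.

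I would begin with $h \in PK(H)$. Theorem~\ref{maintheo1} identifies $\widehat{\omega_h} = \dens(\cL)\cdot\omega_{\widecheck h}$, and since $\widecheck h \ge 0$ by Bochner, this Fourier transform is a positive measure; in particular $\omega_h$ is positive definite. Lemma~\ref{FBgen} then gives
\[
M(\omega_h) \;=\; \widehat{\omega_h}(\{0\}) \;=\; \dens(\cL)\cdot\omega_{\widecheck h}(\{0\}).
\]
By Pontryagin duality the second projection assumption (density of $\pi^H(\cL)$ in $H$) is equivalent to $\pi^{\widehat G}$ being injective on $\cL^0$; since $(0,0)\in\cL^0$, it is then the unique element of $\cL^0$ lying above $0\in\widehat G$, so $\omega_{\widecheck h}(\{0\}) = \widecheck h(0) = \int_H h\, {\rm d}\theta_H$. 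This settles the $PK$-case, and is also where the second projection assumption enters essentially.

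Next I would extend to $h \in \mathrm{span}(PK(H))$ by $\CC$-linearity of both sides, and then to real-valued Riemann integrable $h$ by a sandwich: given $\epsilon > 0$, find $h^\pm \in \mathrm{span}(PK(H))$ real with $h^- \le h \le h^+$ pointwise and $\int_H(h^+ - h^-)\, {\rm d}\theta_H < \epsilon$. Since $\omega_{h^-}(A_n) \le \omega_h(A_n) \le \omega_{h^+}(A_n)$ for every $n$, dividing by $\theta_G(A_n)$, passing to the limit via the linear case, and letting $\epsilon \to 0$ pins $M(\omega_h)$ at $\dens(\cL)\int_H h\, {\rm d}\theta_H$. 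Complex $h$ follows by splitting into real and imaginary parts. For the final clause, $1_W$ is Riemann integrable because $\theta_H(\partial W) = 0$, and under the first projection assumption $\omega_{1_W} = \delta_{\oplam(W)}$, yielding $M(\oplam(W)) = \dens(\cL)\cdot\theta_H(W)$.

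The main obstacle is constructing the sandwich $h^\pm$ inside $\mathrm{span}(PK(H))$ rather than just in $C_c(H)$. A continuous majorant of $h$ can be uniformly approximated by elements of $\mathrm{span}(PK(H))$ via convolutions $\psi*\widetilde\psi$ and the polarisation identity, exactly as in Step~3 of the proof of Proposition~\ref{psflattice}; but uniform approximation alone does not preserve the pointwise bound, so a small bump in $\mathrm{span}(PK(H))$ must be added to restore the inequality while keeping the $L^1$ error of order $\epsilon$. Verifying this compatibly with support control is the essential technical point; the rest of the argument is a clean combination of Theorem~\ref{maintheo1}, Lemma~\ref{FBgen} and Pontryagin duality.
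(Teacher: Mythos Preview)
Your proposal is correct and follows essentially the same route as the paper: establish the formula for $h\in PK(H)$ via Theorem~\ref{maintheo1} and Lemma~\ref{FBgen}, using Pontryagin duality to read off $\omega_{\widecheck h}(\{0\})=\widecheck h(0)$ from the second projection assumption, then extend to Riemann integrable $h$ by a sandwich with functions from $\mathrm{span}(PK(H))$ and treat complex $h$ via real and imaginary parts. Your discussion of the sandwich construction is in fact more explicit than the paper's, which simply asserts the existence of such $g_1\le h\le g_2$ and refers to \cite[Thm.~4.15]{RS2} for the details.
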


\begin{proof}
Consider first any $h\in PK(H)$.
As $\omega_{\widecheck{h}}$ is a measure, we have $\omega_{\widecheck{h}}(\{ 0 \}) =  \widecheck{h} (0)$. Here we used that $\pi^{\widehat G}|_{\cL_0}$ is one-to-one, which follows from the second projection assumption by Pontryagin duality. Moreover, as $\omega_{h}$ is a translation bounded measure and positive definite, we can apply Lemma~\ref{FBgen} to obtain
\begin{displaymath}
\widehat{\omega_h}(\{ 0\}) = \lim_{n\to\infty} \frac{ \omega_h(A_n)}{\theta_G(A_n)}\ .
\end{displaymath}
The claim for  $h\in PK(H)$ follows now from Theorem~\ref{maintheo1}. Next, let $h:H\to\mathbb R$ be any Riemann integrable function. For such $h$ the result follows from the above by approximation, as by the density of $\mathrm{span}(PK(H))$ in $C_c(H)$, there exist two functions $g_1,g_2 \in PK(G)$ such that $g_1 \leq h \leq g_2 $ and $\int (g_2 -g_1)\, {\rm d}\theta_H  \le \varepsilon$, for any $\varepsilon>0$. See e.g.~\cite[Thm.~4.15]{RS2} for details of the argument. In particular, the claim holds for regular $W$, as its characteristic function $1_W$ is Riemann integrable. The case of complex valued $h$ is treated by analysing the real and complex parts separately.
\end{proof}

Next we recall the autocorrelation formula of a Dirac comb $\omega_{1_W}$ for a regular window $W$, see e.g.~\cite{Hof1, Martin2, BM, BG2}. For the ease of the reader, we review the computation. It is similar to that for a lattice in Proposition~\ref{gamlat}, but it additionally uses the density formula Theorem~\ref{dens2}. Note that the first projection assumption is not needed in the derivation, compare \cite[Prop.~5.1]{RS2}.

\begin{proposition}[Autocorrelation of regular model sets]\label{thm:ac}
Let $(G,H,\cL)$ be a cut-and-project scheme and consider the Dirac comb $\omega_{1_W}\in\mathcal M^\infty(G)$ for the regular window $W\subseteq H$. Then $\omega_{1_W}$ has a unique autocorrelation measure $\gamma=\omega_{1_W}\circledast \widetilde{\omega_{1_W}}\in\mathcal{M}^\infty(G)$ which is given by
\begin{displaymath}
\gamma=\mathrm{dens}(\cL)\cdot\omega_{1_W*{\widetilde{1_{W}}}} \,.
\end{displaymath}
\end{proposition}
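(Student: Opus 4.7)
The plan is to mimic the direct calculation used in Proposition~\ref{gamlat}, with the density formula of Theorem~\ref{dens2} replacing the elementary counting available for lattices. Fix an averaging sequence $(A_n)_{n\in\NN}$ in $G$ and set
\[
\gamma_n := \tfrac{1}{\theta_G(A_n)}\,\bigl(\omega_{1_W}|_{A_n}\bigr) * \widetilde{\bigl(\omega_{1_W}|_{A_n}\bigr)}.
\]
First I would expand the convolution as a double sum over $\cL$ and reindex by the difference $(z,w)=(x,y)-(x',y')\in\cL$. This rewrites $\gamma_n$ as a weighted Dirac sum
\[
\gamma_n \;=\; \sum_{(z,w)\in\cL} c_n(z,w)\,\delta_z,
\]
where $c_n(z,w)=\theta_G(A_n)^{-1}\,\mathrm{card}\{(x',y')\in\cL : x'\in A_n\cap(A_n-z),\; y'\in W\cap(W-w)\}$.

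Next I would evaluate $\lim_{n\to\infty} c_n(z,w)$ for each fixed $(z,w)\in\cL$. Since $(A_n)$ is averaging, $\theta_G(A_n\triangle(A_n-z))/\theta_G(A_n)\to 0$, so $c_n(z,w)$ is asymptotic to $\theta_G(A_n)^{-1}\omega_{1_{W\cap(W-w)}}(A_n)$. The set $W\cap(W-w)$ is itself a regular window, because $\partial(W\cap(W-w))\subseteq \partial W\cup(\partial W - w)$ has Haar measure zero. Theorem~\ref{dens2} therefore yields
\[
c_n(z,w)\;\longrightarrow\;\dens(\cL)\cdot\theta_H\bigl(W\cap(W-w)\bigr) \;=\; \dens(\cL)\cdot(1_W*\widetilde{1_W})(w).
\]

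To upgrade this coefficient-wise convergence to vague convergence, I would pair with an arbitrary $f\in C_c(G)$. Only pairs $(z,w)\in\cL$ with $z\in\supp(f)$ and $w\in W-W$ can make $c_n(z,w)f(z)$ nonzero, and since $\cL$ is uniformly discrete while $\supp(f)\times\overline{W-W}$ is compact, there are only finitely many such pairs, independent of $n$. The pairing $\langle\gamma_n,f\rangle$ therefore reduces to a finite sum of scalars, to which the pointwise limit applies term-by-term to give
\[
\langle\gamma_n,f\rangle\;\longrightarrow\;\dens(\cL)\!\!\sum_{(z,w)\in\cL}\!\!(1_W*\widetilde{1_W})(w)\,f(z)\;=\;\dens(\cL)\,\langle\omega_{1_W*\widetilde{1_W}},f\rangle.
\]
This identifies $\gamma=\dens(\cL)\cdot\omega_{1_W*\widetilde{1_W}}$ and simultaneously establishes existence and uniqueness of the autocorrelation.

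The main obstacle I anticipate is the bookkeeping around the reindexing by $(z,w)\in\cL$, in particular the confirmation that the inner counts really match the hypotheses of Theorem~\ref{dens2} applied to the Riemann integrable function $1_{W\cap(W-w)}$, and that the averaging-sequence boundary effects created by the replacement of $A_n\cap(A_n-z)$ by $A_n$ are absorbed into a genuinely uniform limit. A secondary subtlety is that, when $\pi^G$ is not one-to-one on $\cL$, a single $z$ may pick up contributions from several lattice pairs $(z,w),(z,w'),\ldots$; this is precisely what the definition of $\omega_{1_W*\widetilde{1_W}}$ as a sum over all of $\cL$ encodes, so the identification is automatic, but it is worth pointing out explicitly, in line with the remark that the first projection assumption plays no role here.
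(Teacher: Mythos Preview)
Your proposal is correct and follows essentially the same route as the paper: expand the finite autocorrelation, reindex by the lattice difference, absorb the boundary term $A_n\cap(A_n-z)\to A_n$ via the averaging property, apply Theorem~\ref{dens2} to identify the limiting coefficient as $\dens(\cL)\cdot(1_W*\widetilde{1_W})(w)$, and conclude vague convergence by finiteness of the contributing lattice points on each test function. The only cosmetic difference is that the paper invokes Theorem~\ref{dens2} directly for the Riemann integrable weight $y\mapsto 1_W(y)1_W(y-w)$, whereas you phrase it as the density formula for the window $W\cap(W-w)$; note that this intersection can have empty interior and so need not be a regular window in the paper's sense, but its indicator is still Riemann integrable, so the general form of Theorem~\ref{dens2} applies and your argument goes through unchanged.
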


\begin{proof}
Take an averaging sequence $(A_n)_{n\in\mathbb N}$ in $G$.
The autocorrelation of $\omega_{1_W}\in\mathcal M^{\infty}(G)$ is defined as the vague limit of the finite autocorrelation measures $\gamma_n$ given by
 \begin{displaymath}
\gamma_n= \frac{1}{\theta_G(A_n)} \, \omega_{1_W}|_{A_n}*\widetilde{\omega_{1_W}|_{A_n}}
= \frac{1}{\theta_G(A_n)}\, \omega_{1_W}|_{A_n}*\omega_{\widetilde{1_W}}|_{-A_n}
=\sum_{(z, z')\in \cL} \eta_n(z')\delta_z\ .
\end{displaymath}
Here $|_{A_n}$ denotes restriction to $A_n$, and $\eta_n(z')$ is given by
 \begin{displaymath}
\begin{split}
\eta_n(z')&=  \frac{1}{\theta_G(A_n)} \sum_{(x,x') \in \cL \cap (A_n\cap(z+A_n)\times H)} 1_W(x')1_W(x'-z')\\
&= \frac{1}{\theta_G(A_n)} \sum_{(x,x')\in \cL\cap (A_n\times H)} 1_W(x')1_W(x'-z')+o(1)\\
&=\frac{1}{\theta_G(A_n)} \omega_{1_W(\cdot)1_W(\cdot\, -z')}(A_n) + o(1)
\end{split}
\end{displaymath}
asymptotically as $n\to\infty$. Here we used for the estimate that $A_n$ is a ball, $W$ is relatively compact and $\cL$ is uniformly discrete.
Since the function $y\mapsto 1_W(y) 1_W(y-z')$ is Riemann integrable on $H$, we can apply the density formula for weighted model sets and obtain
\begin{displaymath}
\begin{split}
\eta(z')&=\lim_{n\to\infty} \eta_n(z')=\mathrm{dens}(\cL)\cdot \int_H 1_W(y)1_W(y-z')\,{\rm d}\theta_H(y)\\
&=\mathrm{dens}(\cL)\cdot (1_W*\widetilde{1_W})(z').
\end{split}
\end{displaymath}
Here we used the second projection assumption.
Since $\supp(\omega_{1_W*\widetilde{1_W}})$ is uniformly discrete as $1_W*\widetilde{1_W}$ is compactly supported, this implies that $\gamma_n$ converges vaguely to $\gamma$.
\end{proof}

Combining this result with Theorem~\ref{maintheo1}, we arrive at the diffraction formula for regular model sets, see e.g.~\cite{Martin2, BM, BG2}. Note that due to regularity of the window we may replace $W$ by $\overline W$, as this does not affect the autocorrelation coefficients $\eta(z')$, see the above proof.
\begin{theorem}[diffraction formula for regular model sets]\label{diff rms} Consider the Dirac comb $\omega_{1_W}$ for some regular window $W\subseteq H$ in some  cut-and-project scheme $(G, H, \cL)$. Then $\omega_{1_W}$ has autocorrelation $\gamma$ and diffraction $\widehat\gamma$ given by
\[
\gamma = \dens(\cL) \cdot \omega_{1_{\overline{W}} *\widetilde{1_{\overline{W}}}}\ , \qquad \widehat{\gamma}= \dens(\cL)^2 \cdot \omega_{|\widecheck{1_{\overline{W}}}|^2} \,.
\]
\qed
\end{theorem}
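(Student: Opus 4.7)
The plan is to chain together Proposition~\ref{thm:ac} and Theorem~\ref{maintheo1}, reducing the statement to a bookkeeping exercise plus one continuity verification. For the autocorrelation formula, Proposition~\ref{thm:ac} directly gives $\gamma = \dens(\cL) \cdot \omega_{1_W * \widetilde{1_W}}$. Since $W$ is a regular window, $\theta_H(\partial W) = 0$ implies that $1_W$ and $1_{\overline W}$ agree $\theta_H$-almost everywhere, so the convolutions $1_W * \widetilde{1_W}$ and $1_{\overline W} * \widetilde{1_{\overline W}}$ are pointwise equal as they are defined via Haar integrals. This yields the stated autocorrelation formula with $\overline W$.

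For the diffraction formula, I set $h = 1_{\overline W} * \widetilde{1_{\overline W}}$, so that $\gamma = \dens(\cL) \cdot \omega_h$. The key step is to verify $h \in PK(H)$ so that Theorem~\ref{maintheo1} applies to $\omega_h$. Compact support is clear, since $\supp(h) \subseteq \overline W - \overline W$, which is compact. Positive definiteness is the standard identity that $f * \widetilde f$ is positive definite for any $f \in L^2(H)$, applied to $f = 1_{\overline W}$ (which lies in $L^2(H)$ as $\theta_H(\overline W) < \infty$). Continuity of $h$ follows from the Cauchy--Schwarz estimate $|h(x) - h(x')| \le \|T_x 1_{\overline W} - T_{x'} 1_{\overline W}\|_2 \cdot \|1_{\overline W}\|_2$ together with continuity of translation on $L^2(H)$, a fact valid in any LCA group.

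With $h \in PK(H)$ in hand, Theorem~\ref{maintheo1} yields $\widehat{\omega_h} = \dens(\cL) \cdot \omega_{\widecheck h}$ (in the sense of the displayed identity for test functions $g \in PK(G)$). The convolution theorem, applicable since $1_{\overline W}, \widetilde{1_{\overline W}} \in L^1(H)$, gives $\widecheck h = \widecheck{1_{\overline W}} \cdot \overline{\widecheck{1_{\overline W}}} = |\widecheck{1_{\overline W}}|^2$, so that
\begin{displaymath}
\widehat \gamma = \dens(\cL) \cdot \widehat{\omega_h} = \dens(\cL)^2 \cdot \omega_{|\widecheck{1_{\overline W}}|^2},
\end{displaymath}
as claimed. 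The main obstacle I anticipate is the verification that $h \in PK(H)$ in the general LCA setting, specifically the continuity of $1_{\overline W} * \widetilde{1_{\overline W}}$ beyond the Euclidean case. This is precisely where the regularity of the window and the passage from $W$ to $\overline W$ pay off: the closed window makes $h$ a genuine continuous positive definite function, bringing it into the hypothesis class of Theorem~\ref{maintheo1}, and the $L^2$-convolution argument sketched above handles the continuity uniformly in the underlying group.
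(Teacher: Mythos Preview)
Your proof is correct and follows essentially the same route as the paper: combine Proposition~\ref{thm:ac} with Theorem~\ref{maintheo1}, using the regularity of $W$ to pass from $1_W$ to $1_{\overline W}$ so that $h=1_{\overline W}*\widetilde{1_{\overline W}}\in PK(H)$. Your explicit verification that $h\in PK(H)$ via the $L^2$-translation continuity argument simply spells out what the paper delegates to Example~\ref{pdfex}.
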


\begin{remark}
\begin{itemize}
\item[(i)] The diffraction formula is reminiscent of the Wiener diagram for a lattice, see Remark~\ref{genWien}. Indeed, one may first compute the Fourier transform of the window in internal space and then its squared modulus. The latter is then evaluated on the Fourier module, which is the dual of the underlying lattice projected to the direct space.
\item[(ii)] The diffraction formula might no longer be valid if the second projection assumption is violated.
\end{itemize}
\end{remark}

\begin{remark}[maximal density implies pure point diffraction] \label{diff maximal dens}
Any weak model set $\oplam(W)$ satisfies the inequality
\[
\dens(\oplam(W)) \le \dens(\cL) \cdot \theta_H(\overline{W}) \,,
\]
which can be proved by approximating $W$ from above using regular windows \cite[Prop.~3.4]{HR15}. We say that a weak model set has maximal density if we have equality in the above expression, for some fixed averaging sequence. Regular model sets are of maximal density by Theorem \ref{dens2}. The approximation argument can be used to show that Theorem~\ref{diff rms} even holds for weak model sets of maximal density \cite[Cor.~6]{BHS}, compare also \cite[Sec.~3.3.2]{KR}.
\end{remark}

\section{Examples in one dimension}

\subsection{Fibonacci sets \cite[Sec.~9.4.1]{BG2}}\label{sec:fib}

The example in Section~\ref{qcm} has $G=H=\mathbb R$, and the lattice $\cL$ is a rotated copy of $\mathbb Z^2$. The window is an interval of length $1+\tau$. We have $\mathrm{dens}(\cL)=1$, and it is not difficult to show that $\pi^G(\cL)=\pi^H(\cL)=\frac{1}{\sqrt{2+\tau}}\mathbb Z[\tau]$.  Due to our choice of dual groups we have $\widehat G=\widehat H=\mathbb R$ and $\cL^0=\cL$. By a simple calculation, we evaluate the diffraction measure as
\begin{displaymath}
\widehat{\gamma}= \omega_{|\widecheck{1_W}|^2}
=\sum_{k\in \pi^{\widehat G}(\cL^0)} \left(\frac{\sin\left(\pi \frac{1+\tau}{\sqrt{2+\tau}} k^\star\right)}{\pi k^\star}\right)^2 \, \delta_k\ ,
\end{displaymath}
where $\left(\frac{1}{\sqrt{2+\tau}}(n+m\tau)\right)^\star=\frac{1}{\sqrt{2+\tau}}(n\tau-m)$. We can read off the density $\frac{1+\tau}{\sqrt{2+\tau}}=1.894427\ldots$ of the Fibonacci set from the intensity at the origin. The diffraction measure shows that the Bragg peak positions indeed lie dense in $\widehat G=\mathbb R$, as conjectured from the diffraction picture in the right panel of Figure~\ref{fig:diff}. In this example we also see that the linear functional $\omega_{\widecheck{ 1_W}}$ cannot be the Fourier transfrom of $\omega_{1_W}$ as a measure, as the former is not translation bounded, compare Remark~\ref{rem:FT} (ii). Indeed, since $x\mapsto h(x)=\sin(x)/x\notin L^1(\mathbb R)$, there is a sequence of non-negative functions $h_n\in C_c(H)$ such that $h_n\le |h|$ and $\int_H h_n\, {\rm d}\theta_H\to\infty$ as $n\to\infty$.
Therefore, if we assume by contradiction that $\omega_h$ would be a translation bounded measure, we would get that
\begin{displaymath}
\limsup_m \frac{|\omega_h|(A_m)}{\theta_G(A_m)}\ge M(\omega_{h_n})=\int_H h_n \, {\rm d}\theta_H \to\infty \qquad (n\to\infty)\ ,
\end{displaymath}
which contradicts the fact that $\omega_h$ is translation bounded.

\subsection{Squarefree integers}\label{sec:sfi}

An integer $n$ is squarefree if it does not contain a square, i.e., if $n \mod p^2\ne0$ for every prime $p$. We recall that the squarefree integers $\mathcal S$ are a weak model set, see  \cite{me73, bmp00, BM} and \cite[Ex.~10.3]{BG2}. We use  the cut-and-project scheme $(G,H,\mathcal L)$ where  $G=\mathbb R$, and where $H$ is the compact group given by
\begin{displaymath}
H=\prod_{p\in \mathcal P} \mathbb Z/ p^2\mathbb Z\ ,
\end{displaymath}
with $\mathcal P$ the set of primes \cite[Sec.~5a]{sing07}. This setting simplifies previous diffraction computations \cite{bmp00} as it avoids adelic internal spaces.
It is also well suited to study relations to $B$-free systems \cite{BKKL2015}, compare \cite{BHS, KR}.

We write $n^\star=(n \mod p^2)_{p\in \mathcal P}$ for $n\in\mathbb Z$ and note that $\cL=\{(n, n^\star): n\in\mathbb Z\}$ is a lattice in $G\times H$. Indeed it is a group, and  discreteness and relative denseness follow as $\mathbb Z$ is a lattice in $G$ and as $H$ is compact. Obviously $\pi^G$ is one-to-one on $\cL$, and $\pi^H(\cL)$ is dense in $H$, as can be seen from the Chinese remainder theorem. Note that $\ZZ=\oplam(H)$ is a regular model set, hence
\begin{displaymath}
1 = \dens(\ZZ) = \dens(\cL) \cdot \theta_{H}(H)=\dens(\cL) \,.
\end{displaymath}
For the squarefree integers, we have $\mathcal S=\oplam(W)$ with window
\begin{displaymath}
W=\prod_{p\in\mathcal P} \left(\mathbb Z /p^2\mathbb Z \setminus \{0_p\}\right) \ .
\end{displaymath}
The Haar measure of the window $W$ is given by \cite{BHS}
\begin{displaymath}
\theta_H(W)= \prod_{p\in \mathcal P} \left(1-\frac{1}{p^2}\right)=\frac{1}{\zeta(2)}>0 \ .
\end{displaymath}
The density of $\mathcal S$ exists when averaging on intervals centered in 0.  An explicit non-trivial calculation which we omit, see e.g.~\cite[Prop.~11]{bmp00}, reveals that
$\mathrm{dens}(\Lambda)= \theta_H(W)$, which means that $\mathcal S$ has maximal density. Hence $\mathcal S$  is pure point diffractive\footnote{
In fact $\mathcal S$ is a non-regular model set:  Since no component of $W$ is maximal, the compact window has empty interior.  We thus have $W=\partial W$ and hence $\theta_H(\partial W)>0$. This  also indicates that $\mathcal S$ has positive configurational entropy \cite{PH, HR15}, which might seem to contradict pure point diffractivity. But $\mathcal S$ has in fact zero Kolmogorov-Sinai entropy with respect to the natural pattern frequency measure \cite{BH, BHS, KR}.
}
by Remark~\ref{diff maximal dens}. The Fourier transform of the window is readily computed by exploiting the product structure of $H$, $W$ and the characters. We get
\begin{displaymath}
\widecheck{1_{W}}(\ell)= \int_H 1_W(h)\overline{\chi_\ell(h)}\,{\rm d}h= \prod_{p\in \mathcal P} \left(\delta_{0, \ell_p}-\frac{1}{p^{2}} \right)=(-1)^{|\ell|} \frac{1}{\zeta(2)} \prod_{\substack{p\in\mathcal P\\ \ell_p\ne 0}}\frac{1}{p^2-1}\ ,
\end{displaymath}
where $\ell=(\ell_p)_p\in \widehat{H}$ and $|\ell|=\sum_p |\ell_p|<\infty$. Next, we find a parametrisation for the dual lattice  $\cL^0\subset \widehat G\times \widehat H$ of $\cL$. By definition we have
\begin{displaymath}
\cL^0 = \{ (x,\ell) \in \widehat{G} \times \widehat{H} : \exp\left(- 2 \pi i  \, x \cdot n \right) \cdot \chi_\ell( n^\star ) =1 \text{ for all } n \in \ZZ \} \,.
\end{displaymath}
The annihilation condition on the characters is equivalent to
\begin{displaymath}
x \cdot 1 + \sum_{p\in\mathcal P} \frac{\ell_p \cdot 1}{p^2} \in \ZZ \ ,
\end{displaymath}
where we replaced $n$ by $1$ without loss of generality. We thus have
\begin{displaymath}\label{eq:cL0}
\cL^0= \left\{ \left(k-\sum_{p\in\mathcal P} \frac{\ell_p}{p^2}, \ell\right): k\in \mathbb Z, \ell=(\ell_p)_p \in \widehat H  \right\} \ .
\end{displaymath}
A moment's reflection reveals that the Fourier module $\pi^{\widehat G}(\cL^0)$ consists of all rationals with cubefree denominator. We thus get for the diffraction measure
\begin{displaymath}
\widehat \gamma= \omega_{|\widecheck{1_W}|^2}=\sum_{r\in\pi^{\widehat G}(\cL^0)} I_r \, \delta_r\ , \qquad I_{r}= \frac{1}{\zeta(2)^2} \prod_{p|q}\frac{1}{(p^2-1)^2}\ ,
\end{displaymath}
where  $r=m/q$ with $\mathrm{lcd}(m,q)=1$ and $q$ cubefree. The left panel of Figure \ref{fig:diffperiod} shows the diffraction of a squarefree integer sample on a logarithmic intensity scale, together with labels $(m,q)$ on the high intensity Bragg peaks. Note that the diffraction measure is $\mathbb Z$-periodic.

\begin{figure}[htb]
\begin{center}
\begin{minipage}[b]{0.45\textwidth}
\center{\epsfig{file=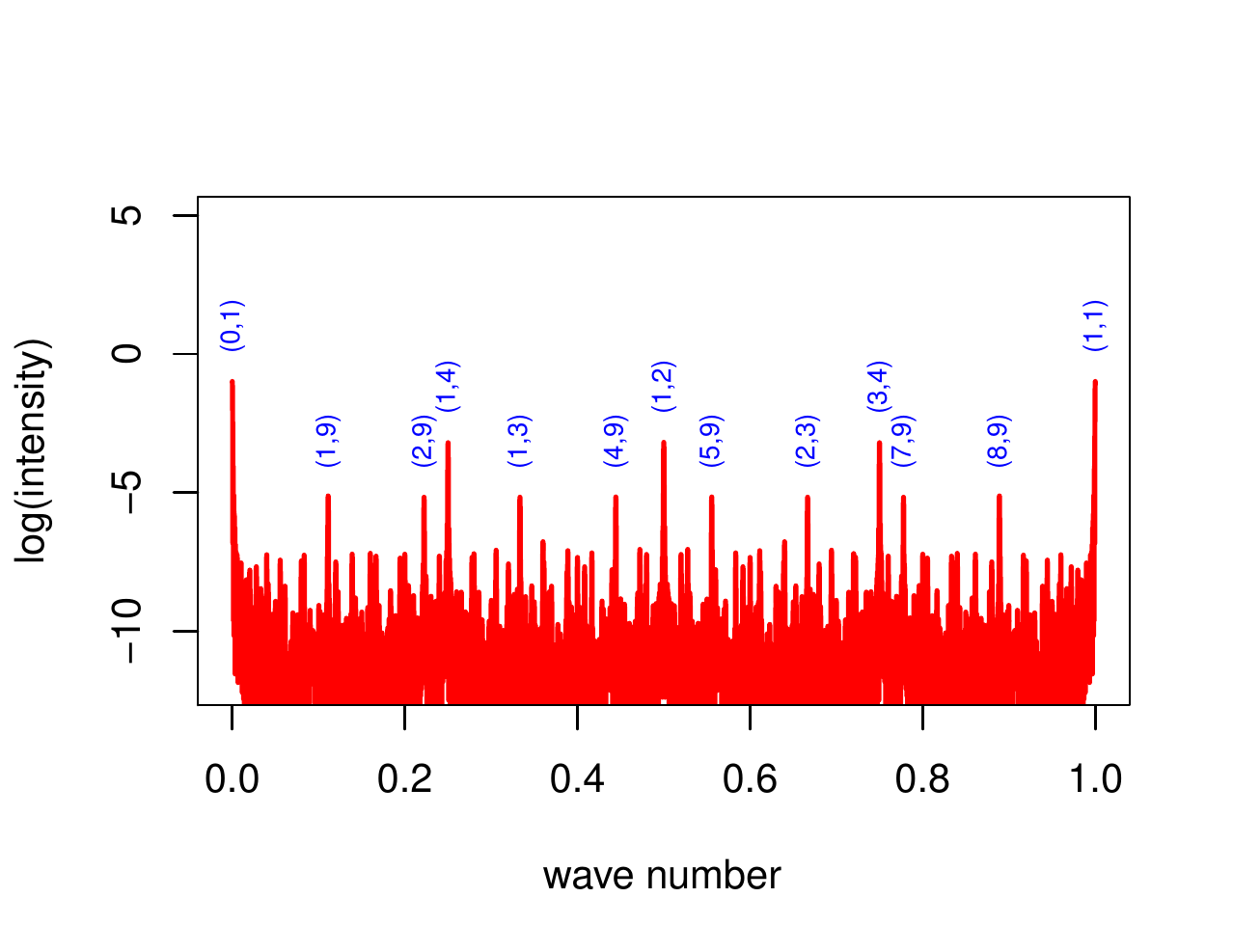, width=6.5cm}}
\end{minipage}
\hfill
\begin{minipage}[b]{0.45\textwidth}
\center{\epsfig{file=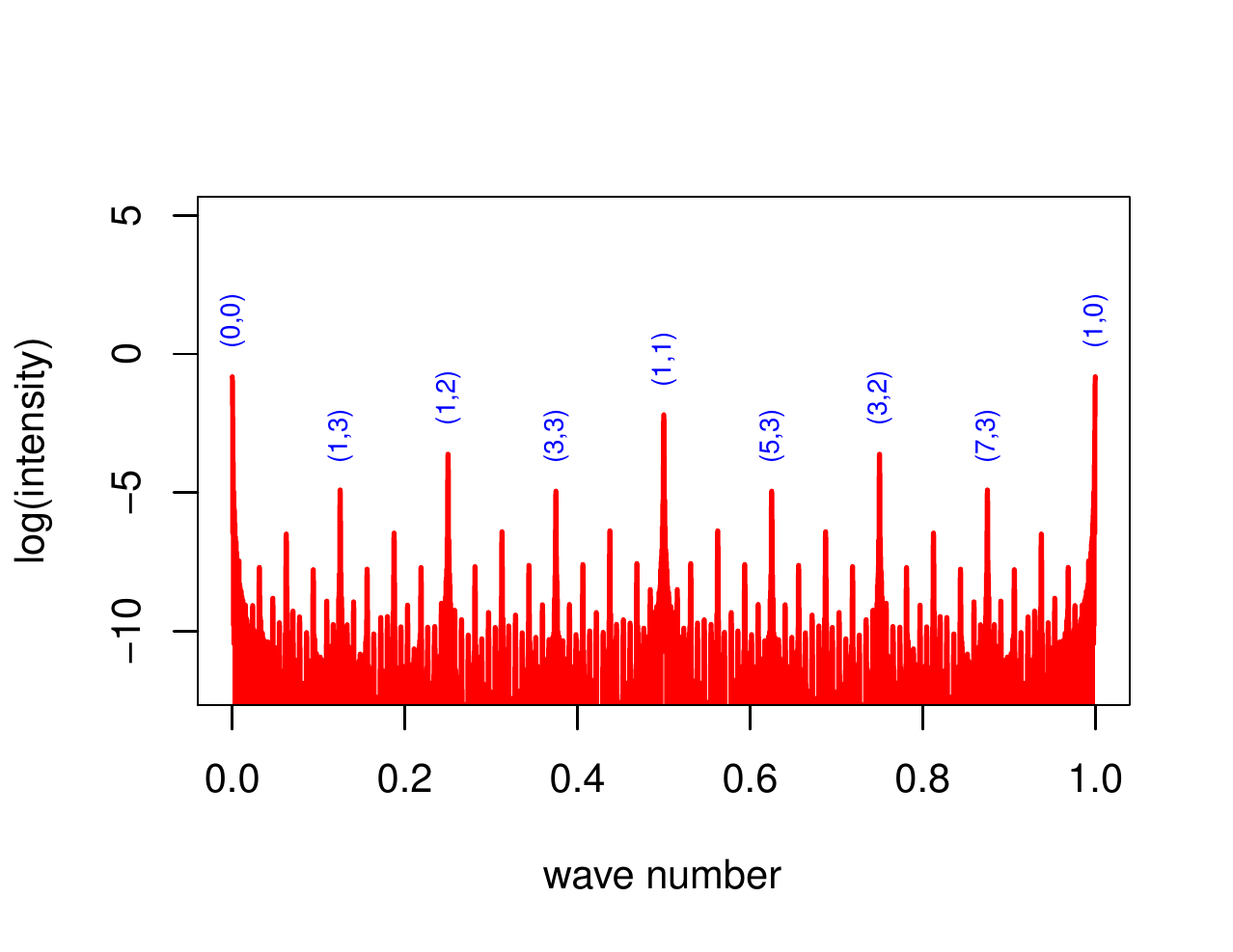, width=6.5cm}}
\end{minipage}
\end{center}
\caption{Diffraction of a size 1500 sample of squarefree integers (left) and  of the period doubling set (right).}
\label{fig:diffperiod}
\end{figure}

\subsection{Period doubling set \cite[Sec.~4.5.1]{BG2} }\label{sec:pds}

Let us consider $\mathcal D\subset \mathbb Z$ given by
\begin{displaymath}
\begin{split}
\mathcal D &= \bigcupdot_{j\ge 0} \left(2\cdot 4^j\mathbb Z+(4^j-1)\right) \,.
\end{split}
\end{displaymath}
It is a union of lattice cosets, which gives rise to a so-called limit periodic point set. The set $\mathcal D$ can be described as a so-called regular Toeplitz sequence \cite{BJL} and can be generated from the so-called period doubling substitution, see \cite[Ch.~8]{BM} for details. The limit periodicity can be used to compute the diffraction of $\mathcal D$ using the PSF on lattice approximants and by then taking the limit, as indicated in \cite[Ch.~8]{BM}.  Here we give a complete derivation based on the diffraction formula for regular model sets.  We split our calculation into several steps.

\emph{Step 1 (cut-and-project scheme and windows):}
A natural candidate of a cut-and-project scheme $(G,H,\mathcal L)$ for $\mathcal D$ is $G=\mathbb R$ and
$H=\prod_{j\in J} \mathbb Z/ 2^j \mathbb Z$, where $J=\mathbb N$. As in the previous example we argue that $\cL=\{(n, n^\star): n\in\mathbb Z\}$
is a lattice in $G\times H$, where now $n^\star=(n_j)_{j\in J}=  (n \mod 2^j)_{j\in J}$. We have $\mathcal D=\oplam(W)$ with $W=\bigcup_{j\in J} W_j$, where the clopen set $W_j$ equals $H$ with its $(2j+1)^{th}$ component replaced by $\{4^{j}-1\}\subseteq \mathbb Z/ (2\cdot 4^j)\mathbb Z$. In particular, $W$ is open in the infinite product topology.
Note that $\pi^H(\cL)$ is not dense in $H$, as the numbers $\{2^j:j\in J\}$ are not pairwise coprime. Consider thus
\begin{displaymath}
H'=\overline{\pi^H(\cL)}=\{ (h_j)_j: h_{j} = h_{j+1} \mod 2^j \text{ for all }j\in J\}\subset H
\end{displaymath}
together with the probability Haar measure $\theta_{H'}$ on the compact group $H'$.
With $\cL'=\cL\cap (G\times H')$ and $W'=W\cap H'$, the cut-and-project scheme $(G, H', \cL')$ satisfies $\cD = \oplam(W')$. Arguing as above, we find that
$\cL'$ has density $1$ in $G \times H'$. Observe that for $W_j'=W_j\cap H'$ the shifted lattice $\oplam(W'_j)=2 \cdot 4^j \ZZ+ (4^j-1)$ is a regular model set of density $\frac{1}{2 \cdot 4^j}$. Hence the density formula for regular model sets yields $\theta_{H'}(W_j')=\frac{1}{2 \cdot 4^j}$.

\emph{Step 2 (regularity of the window):}
As the sets $W_j'$ are clopen, we have for the boundary $\partial W'$ of $W'$ the estimate
\begin{displaymath}
\partial W'\subseteq \partial \left(\bigcup_{j\le n} W_j' \cup  \bigcup_{j>n} W_j' \right)
\subseteq \partial \left( \bigcup_{j>n} W_j' \right) \subseteq \overline{\bigcup_{j>n} W_j'}
\end{displaymath}
for any $n\in\mathbb N$. Noting that $h=(h_k)_k\in W_j'$ implies that $h_k=-1 \mod 2^k$ for $k<j$, we infer that $\partial W'\subseteq \{(-1)^\star\}$.  In particular, $W'$ is regular.  We can thus apply Theorem~\ref{diff rms} to compute the period doubling diffraction. In fact we have $\partial W' = \{(-1)^\star\}$, as is easily seen.

\emph{Step 3 (dual cut-and-project scheme):} Recall that $\widehat H=\bigoplus_{j\in J} \ZZ/2^j\ZZ$.  By Pontryagin duality we have $\widehat{H'}=\widehat H/(H')^0$, where the annihilator $(H')^0$ of $H'$ in $\widehat H$ is readily computed as
\begin{displaymath}
(H')^0= \left\{  \ell=(\ell_j)_{j\in J} \in \widehat{H} : \sum_{j\in J} \ell_j/2^j\in \mathbb Z \right\} \ ,
\end{displaymath}
compare the previous example.
Next, we find the dual lattice $\cL'^0\subseteq \widehat G\times \widehat{H'}$. A calculation very similar to that in the previous example yields
\begin{displaymath}
\cL'^0=\left\{\left(k+\sum_{j\in J} \frac{\ell_j}{2^j}, -\ell+(H')^0\right): k\in\mathbb Z, \ell=(\ell_j)_{j\in J}\in \widehat H\right\} \ .
\end{displaymath}
We observe that the Fourier module $\pi^{\widehat G}(\cL'^0)$ is given by
\begin{equation}\label{eq:zhalf writing}
\pi^{\widehat G}(\cL'^0)=\mathbb Z[\frac{1}{2}]= \left\{\frac{m}{2^r}: (r=0, m=\mathbb Z) \text{ or } (r\ge 1, m \text{ odd}) \right\} \ ,
\end{equation}
where the latter expression is chosen for a unique parametrisation \cite[Eqn.~(23)]{BM}. Defining $\psi :  \ZZ[\frac{1}{2}] \to \widehat{H'}$ by $\psi(x)=(\psi(x)_j)_{j\in J}+(H')^0$ where $\psi(m/2^r)_r= m \mod 2^r$ and $\psi(x)_j=0$ otherwise, we can thus write
\begin{displaymath}
\cL'^0=\left\{ (x, \psi(x)) : x \in \ZZ[\frac{1}{2}] \right\}\ .
\end{displaymath}
Hence $\psi$ is what is sometimes called the star map for $(\widehat G, \widehat{H'}, \cL'^0)$.

\emph{Step 4 (transform of the window):}
Note that by dominated convergence we can write
\begin{displaymath}
\widecheck{1_{W'}}(\psi(x))=\int_{W'} \chi_{\psi(-x)}(h) \, {\rm d}h = \sum_{j\in J} \int_{W'_j} \chi_{\psi(-x)}(h) \, {\rm d}h = \sum_{j\in J} \widecheck{1_{W_j'}}(\psi(x)) \ .
\end{displaymath}
Fix $x=m/2^r$. Now, if $2j+1 <r$, then $r>1$ and hence $m$ is odd. As $2j+1 \leq r-1$, an easy computation shows $(2^{r-1})^\star+W'_j=W'_j$. Therefore, by the invariance of the Haar measure we get
\begin{displaymath}
\begin{split}
\widecheck{1_{W'_j}}(\psi(x))&= \int_{W'_j} \chi_{\psi(-x)}(h) \, {\rm d}h
=  \int_{W'_j-(2^{r-1})^\star} \chi_{\psi(-x)}(h) \, {\rm d}h \\
&=  \int_{W'_j} \chi_{\psi(-x)}(h+(2^{r-1})^\star) \, {\rm d}h  =
 \int_{W'_j} \exp\left(-2 \pi i\, (-x)\cdot (h_{r} +2^{r-1}) \right) \, {\rm d}h \\
&=  \int_{W'_j}  \exp\left(-2 \pi i \frac{(-m) \cdot 2^{r-1}}{2^r} \right)  \exp\left(-2 \pi i \, (-x) \cdot h_{r} \right) {\rm d}h \\
&=  \int_{W'_j}  - \chi_{\psi(-x)}(h) \, {\rm d}h = -\widecheck{1_{W'_j}}(\psi(x)) \,.
\end{split}
\end{displaymath}
This shows that for $2j+1<r$ we have $\widecheck{1_{W'_j}}(\psi(x)) =0$.
Also, for all $2j+1 \geq r$ we can write $x= \frac{m}{2^r}= \frac{m\cdot2^{2j+1-r}}{2 \cdot 4^j}$, and this particular form makes the integration over $W'_j$ easier to  evaluate:
\begin{displaymath}
\begin{split}
\widecheck{1_{W_j'}}(\psi(x))&= \frac{1}{2 \cdot 4^j} \exp\left(-2 \pi i \, (-x)\cdot (4^j-1) \right)  \\
&= \frac{1}{2 \cdot 4^j} \exp\left(-2 \pi i \frac{(-m)\cdot2^{2j+1-r}}{2 \cdot 4^j}\cdot 4^j\right) \cdot\exp\left(-2 \pi i \, x\right)  \\
&= \frac{1}{2 \cdot 4^j}\cdot (-1)^{m\cdot2^{2j+1-r}} \cdot \exp(-2 \pi i \, x)   \ .
\end{split}
\end{displaymath}
Note here that when $2j+1 > r$ we have $(-1)^{m \cdot 2^{2j+1-r}}=1$. Therefore, all the terms of the form $(-1)^{m \cdot 2^{2j+1-r}}$ are one, excepting when $r$ is odd and $j=\frac{r-1}{2}$, disappear from the computation. Because of this, we split the problem in the cases $r$ odd and $r$ even.
For $r=2k$  even we compute
\begin{displaymath}
 \sum_{2j+1 \geq 2k} \frac{1}{2 \cdot 4^j} =  \sum_{j \geq k} \frac{1}{2 \cdot 4^j} = \frac{1}{2^{2k+1}} \cdot \sum_{l=0}^\infty \frac{1}{4^l}= \frac{1}{2^{2k+1}}\cdot \frac{4}{3} =  \frac{2^{1-r}}{3} =(-1)^r\cdot \frac{2^{1-r}}{3} \ .
\end{displaymath}
If $r=2k+1$ is odd, then we must also have $m$ odd and hence
\begin{displaymath}
\begin{split}
  \sum_{j \geq k} \frac{1}{2 \cdot 4^j}\cdot (-1)^{m\cdot2^{2j-2k}}
&= -\frac{1}{2 \cdot 4^k}+ \sum_{j > k} \frac{1}{2 \cdot 4^j} =  -\frac{1}{2^{2k+1}}+ \frac{1}{2^{2k+1}}\cdot \frac{4}{3}  -\frac{1}{2^{2k+1}} \\
&= -\frac{2}{3}\cdot\frac{1}{2^{2k+1}}  =(-1)^r \cdot \frac{2^{1-r}}{3} \ .
\end{split}
\end{displaymath}
This shows that for all $x=\frac{m}{2^r}$ with ($m$ odd and $r \geq 1$) or with $r=0$ we have
\begin{displaymath}
\widecheck{1_{W'}}(\psi(x))=(-1)^r\cdot \frac{2^{1-r}}{3} \cdot \exp(-2 \pi i \, x) \,.
\end{displaymath}
Therefore, by Theorem \ref{diff rms} we have with the representation from Eqn.~\eqref{eq:zhalf writing}
\begin{displaymath}
\widehat{\gamma} =\sum_{\frac{m}{2^r} \in \mathbb Z[\frac{1}{2}]} \frac{4^{1-r}}{9} \delta_{\frac{m}{2^r}} \,.
\end{displaymath}
The right panel of Figure \ref{fig:diffperiod} displays the diffraction of a sample, together with labels $(m,r)$.

\section*{Acknowledgements}

NS was supported by NSERC, under grant 2014-03762, and the author is thankful for the support. The manuscript was finalised when NS visited CR at FAU Erlangen--N\"urnberg in summer 2016, and NS would like to thank the Department of Mathematics for hospitality. We thank Michael Baake and Uwe Grimm for useful comments on the manuscript and Tony Guttmann for pointing out reference \cite{NL92} to us. We also thank the referees for useful comments on the manuscript. CR would like to thank the organisers of the conference \textit{Guttmann 2015 -- 70 and Counting} for generous financial support. We thank Conradin Beeli for his kind permission to use the graphics in the left panel of Figure~\ref{fig:diff}.

\appendix
\renewcommand{\theequation}{A\arabic{equation}}
\setcounter{equation}{0}

\section{LCA groups}\label{app:LCA}

We gather basic facts about Fourier analysis on locally compact abelian (LCA) groups. Further background is given  in \cite{RUD, BF, HeRo}, see also \cite[Sec.~2.3.3--2.3.4]{BG2}.

\subsection{LCA groups and their duals}

Cut-and-project schemes $(G,H,\cL)$ may exist in certain general LCA groups. In this article we restrict to Euclidean direct space and to internal space being a compactly generated LCA group, the latter without loss of generality by Remark~\ref{rmsrem}. For any  compactly generated LCA group $G$, the structure theorem \cite[Thm.~9.8]{HeRo} asserts that $G=\mathbb R^d\times\mathbb Z^m\times K$ for some $d,m\in\mathbb N_0$ and some compact group $K$. A relevant example of such $K$ is the product space
\begin{equation}\label{Kref}
K=\prod_{i\in I} \mathbb Z/{n_i\mathbb Z}
\end{equation}
with componentwise addition, where $I$ is some countable index set and $n_i$ are natural numbers. As a countable sum of finite discrete groups, it is indeed compact in the product topology by Tychonoff's theorem.

Any LCA group $G$ admits an invariant measure $\theta_G$, which is unique up to normalisation.  It is called Haar measure. On $\mathbb R^d$, we have the Lebesgue measure, for discrete groups we have the counting measure, and for $K$ in Eq.~\eqref{Kref} we have the product measure of the normalised counting measures on the components.
In order to analyse the frequency of a point set $\Lambda$ in an LCA group $G$, we need a suitable averaging sequence $(A_n)_{n\in\mathbb N}$ generalising balls, i.e., we want to define
\begin{displaymath}
\mathrm{dens}(\Lambda)=\lim_{n\to\infty} \frac{\mathrm{card}(\Lambda \cap A_n)}{\theta_G(A_n)}
\end{displaymath}
if this limit exists. Here $\mathrm{card}(A)$ denotes the number of elements of $A$. For simplicity, in this article we will always take $A_n=\{ (x,y,z) \in \mathbb R^d\times\mathbb Z^m\times K: \max(|x|, |y|) \le n \}$. Our results actually hold for more general so-called van Hove sequences \cite{Martin2}, whose ``boundary-to-bulk ratio'' vanishes in the infinite volume limit.

A character $\chi$ of $G$ is a group homomorphism into the unit circle group $\{z\in\mathbb C: |z|=1\}$. The dual group $\widehat G$ is the set of all \textit{continuous} characters with multiplication as group operation. It is an LCA group when equipped with the topology of compact convergence. For $G=\mathbb R^d$, any continuous character is of the form $\chi_k(x)=e^{2\pi \imath \, k\cdot x}$ for some $k\in \mathbb R^d$. Here $k\cdot x$ denotes the standard scalar product of $k,x\in\mathbb R^d$. We may thus identify $\widehat G$ with $G$ in that case. For $G=\mathbb Z^m$, any character is of the form $\chi_k(x)=e^{2\pi \imath\,  k\cdot x}$ for some $k\in \mathbb T^m=(\mathbb R/\mathbb Z)^m$, the $m$-dimensional torus. We may thus identify $\widehat G$ with $\mathbb T^m$ in that case. For $G=\mathbb Z/n \mathbb Z$, any character is of the form $\chi_k(x)=e^{2\pi \imath \, k\cdot x/n}$ for some $k\in\{0,1,\ldots, n-1\}$. We may thus identify $\widehat G$ with $G$ in that case. Similarly, by requiring continuity of the characters, we may identify the dual of $K$ in Eq.~\eqref{Kref} with the direct sum
\begin{displaymath}
\widehat K\cong\bigoplus_{i\in I} \mathbb Z/{n_i\mathbb Z}\ ,
\end{displaymath}
i.e., any element of $\widehat K$ has only finitely many non-zero components. As a consequence of the above results, we have
\begin{displaymath}
\reallywidehat{ \mathbb R^d\times \mathbb Z^m\times K} \cong \mathbb R^d\times \mathbb T^m\times \widehat K\ .
\end{displaymath}
In fact we have that $\mathbb R^d$, $\mathbb Z^m$, $\mathbb T^m$, $K$ and $\widehat K$ are all isomorphic to their double duals, as this is true of any LCA group.
Also recall that $G$ is discrete if and only if $\widehat{G}$ is compact, and that $G$ is compact if and only if $\widehat{G}$ is discrete. The latter statements all follow from Pontryagin duality \cite{RUD}.

\subsection{Lattices in LCA groups and their duals}

A \textit{lattice} $\mathcal L$ in $G$ is a discrete subgroup of $G$ such that $G/\cL$ is compact. Any lattice admits a relatively compact measurable fundamental domain $F$ such that $G=F+\cL$ is a unique decomposition. A subgroup $H$ of $G$ is a lattice if and only if $H$ is uniformly discrete and relatively dense in $G$, i.e., there is a zero neighborhood $U\subset G$ such that $H\cap U=\{0\}$ and there is  compact $K\subset G$ such that $H+K=G$. Any lattice has a point density $\mathrm{dens}(\cL)$ which equals the reciprocal of the Haar measure of its fundamental domain.

The so-called \textit{dual lattice} $\cL^0\subset \widehat G$  is defined to be the annihilator of $\cL$ in $\widehat G$, i.e.,
\begin{displaymath}
\cL^0 = \{ \chi \in \widehat{G} : \chi(x)= 1 \text{ for all } x \in \cL \}\ .
\end{displaymath}
This is indeed a lattice, as follows from Pontryagin duality $\cL^0\cong \widehat{G/\cL}$ and $ \widehat{G}/\cL^0\cong \widehat{\cL}$, by recalling that compactness and discreteness are dual notions. We have $(\cL^0)^0\cong \cL$, see \cite[Lem.~2.13]{RUD}. For a lattice $\cL$ in $\mathbb R^d$, we thus obtain the dual lattice $\cL^0 = \{ y \in \mathbb R^d : x\cdot y\in \mathbb Z \text{ for all } x \in \cL \}$ by the above identification. 

\section{Positive definite functions}\label{app:pdfctns}

We denote by $C_c(G)$ the space of continuous, compactly supported functions $f:G\to\mathbb C$, and by $C_U(G)$ the space of uniformly continuous and bounded functions $f:G\to\mathbb C$. The expression  $f*g(x)=\int_G f(y)g(x-y){\rm d}\theta_G(y)$ denotes convolution of $f,g\in L^1(G)$, where $L^1(G)$ denotes the space of integrable functions on $G$. We write $\widetilde f(x)=\overline{f(-x)}$ and $f^\dag(x)=f(-x)$. We define the Fourier transform of $f\in L^1(G)$ by  $\widehat f(\chi)=\int_G\overline{\chi}(x) f(x){\rm d}\theta_G(x)$ for $\chi\in\widehat G$. The inverse Fourier transform is denoted by $\widecheck f(\chi)=\int_G \chi(x) f(x){\rm d}\theta_G(x)$.

\begin{definition}[positive definite function] \cite[Def.~3.3]{BF} \label{def:pdfctn}
A function $\varphi:G\to\mathbb C$ is \textit{positive definite} if for any $n\in\mathbb N$, for any $x_1,\ldots, x_n\in G$ and for any $c_1,\ldots, c_n\in \mathbb C$
\begin{displaymath}
\sum_{i,j=1}^n c_i \varphi(x_i-x_j) \overline{c_j} \ge 0.
\end{displaymath}
The space of positive definite continuous compactly supported functions $G\to\mathbb C$ is denoted by $PK(G)$.
\end{definition}

\begin{example}\label{pdfex}
If $f\in L^2(G)$ and compactly supported,  then $f * \widetilde f\in PK(G)$. This holds since
\begin{displaymath}
\sum_{i,j=1}^n c_i f*\widetilde f(x_i-x_j) \overline{c_j} =\int_G \left| \sum_{i=1}^n f(x+x_i)c_i\right|^2 {\rm d}\theta_G(x)\ge 0\ .
\end{displaymath}
\end{example}

Continuous positive definite functions are important in Fourier analysis due to Bochner's theorem\footnote{See Appendix~\ref{app:B} for some background on measures.}.
\begin{theorem}[Bochner]\cite[Thm.~3.12]{BF}, \cite[1.4.3]{RUD}\label{bochthm}
A continuous function $\varphi:G\to\mathbb C$ is positive definite if and only if there exists a positive finite measure $\sigma_\varphi\in\mathcal M(\widehat G)$ such that
\begin{displaymath}
\varphi(x)=\int_{\widehat G} \chi(x) {\rm d} \sigma_\varphi(\chi)\ .
\end{displaymath}
In that case, the measure $\sigma_\varphi$ is uniquely determined.
\qed
\end{theorem}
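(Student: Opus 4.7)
The easy direction is immediate: if $\varphi(x)=\int_{\widehat G}\chi(x)\,{\rm d}\sigma_\varphi(\chi)$ with $\sigma_\varphi$ positive and finite, continuity of $\varphi$ follows from dominated convergence (characters are bounded by $1$ and $\sigma_\varphi$ is finite), and positive definiteness from the direct computation
\[
\sum_{i,j=1}^n c_i\,\overline{c_j}\,\varphi(x_i-x_j)
\ =\ \int_{\widehat G}\Bigl|\sum_{i=1}^n c_i\,\chi(x_i)\Bigr|^2 {\rm d}\sigma_\varphi(\chi)\ \ge\ 0.
\]

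For the hard direction, the plan is to build $\sigma_\varphi$ by the Riesz representation theorem. Setting $n=1$ and $n=2$ in Definition~\ref{def:pdfctn} first yields $\varphi(0)\ge 0$, $\varphi(-x)=\overline{\varphi(x)}$, and $|\varphi(x)|\le\varphi(0)$, so $\varphi$ is bounded. Approximating sums by Riemann integrals then promotes pointwise positive definiteness to the integral form
\[
\int_G\varphi(x)\,(g*\widetilde g)(x)\,{\rm d}\theta_G(x)\ \ge\ 0 \qquad (g\in C_c(G)),
\]
the right-hand side being the integral analogue of $\sum c_i\overline{c_j}\varphi(x_i-x_j)$. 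By the convolution theorem, $g*\widetilde g$ is the inverse Fourier transform of $|\widecheck g|^2\ge 0$ (compare Example~\ref{pdfex}). This motivates defining a non-negative functional $\Lambda$ on the cone $\mathcal K:=\{|\widecheck g|^2:g\in C_c(G)\}\subseteq C_0(\widehat G)$ by
\[
\Lambda(|\widecheck g|^2)\ :=\ \int_G\varphi(x)\,(g*\widetilde g)(x)\,{\rm d}\theta_G(x),
\]
and extending to $\mathrm{span}(\mathcal K)$ by polarisation, which simultaneously establishes well-definedness and $\mathbb C$-linearity.

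The main obstacle is extending $\Lambda$ to a positive Radon measure on $\widehat G$. Here one verifies that $\mathrm{span}(\mathcal K)$ is a subalgebra of $C_0(\widehat G)$ closed under complex conjugation (since $|\widecheck{g_1}|^2\cdot|\widecheck{g_2}|^2=|\widecheck{g_1*g_2}|^2\in\mathcal K$) that separates points of $\widehat G$ and vanishes nowhere; by Stone--Weierstrass it is thus uniformly dense in $C_0(\widehat G)$. A Daniell--Stone / Riesz extension, using $\Lambda\ge 0$ on the cone, then promotes $\Lambda$ to a positive Radon measure $\sigma_\varphi$ on $\widehat G$ with $\Lambda(\psi)=\int_{\widehat G}\psi\,{\rm d}\sigma_\varphi$ on $\mathrm{span}(\mathcal K)$. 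The identity $\varphi(x)=\int_{\widehat G}\chi(x)\,{\rm d}\sigma_\varphi(\chi)$ is recovered by evaluating the defining relation of $\Lambda$ along a family $g_\alpha\in C_c(G)$ for which $g_\alpha*\widetilde g_\alpha$ approximates a point mass at $x$: the left-hand side converges to $\varphi(x)$ by continuity, and the right-hand side to $\int_{\widehat G}\chi(x)\,{\rm d}\sigma_\varphi(\chi)$ by the corresponding approximation on $\widehat G$. Specialising to $x=0$ yields $\sigma_\varphi(\widehat G)=\varphi(0)<\infty$, so $\sigma_\varphi$ is finite. Uniqueness is then immediate, since a finite positive measure on $\widehat G$ is determined by its integrals against the dense subspace $\mathrm{span}(\mathcal K)$ of $C_0(\widehat G)$.
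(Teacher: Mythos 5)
The paper itself offers no proof of this statement --- it is quoted from Berg--Forst and Rudin with a \textsc{qed} box --- so there is nothing in-paper to compare against; I evaluate your argument on its own. Your easy direction is correct, and your overall architecture (pass from pointwise to integrated positive definiteness, define a functional on the cone $\mathcal K=\{|\widecheck g|^2 : g\in C_c(G)\}$, represent it by a measure) is the classical one. But there is a genuine gap at the decisive step. Positivity of $\Lambda$ on the cone $\mathcal K$ is much weaker than positivity on all non-negative functions in its domain, and neither of the extension theorems you invoke applies as stated: $\mathrm{span}(\mathcal K)$ is not a vector lattice, so Daniell--Stone is unavailable, and the positive cone of $C_0(\widehat G)$ has empty interior when $\widehat G$ is non-compact, so a Riesz--Krein positive extension is unavailable as well. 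What every proof of Bochner's theorem must supply at exactly this point is the a priori bound $|\Lambda(|\widecheck g|^2)|\le\varphi(0)\cdot\bigl\||\widecheck g|^2\bigr\|_\infty$. In the Banach-algebra proof this comes from the Cauchy--Schwarz inequality $|T(f*h^*)|^2\le T(f*f^*)\,T(h*h^*)$ for the positive functional $T(f)=\int_G\varphi f\,{\rm d}\theta_G$ on $L^1(G)$, iterated against an approximate identity so that the spectral radius formula $\lim_n\|h^{*n}\|_1^{1/n}=\|\widehat h\|_\infty$ delivers the sup-norm on the Fourier side. Without this estimate you can neither extend $\Lambda$ continuously to $C_0(\widehat G)$ nor conclude that the extension is positive on non-negative functions, so no measure is produced; Stone--Weierstrass density alone is not a substitute.

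A second, smaller defect: for $x\ne 0$ you cannot choose $g_\alpha$ so that $g_\alpha*\widetilde{g_\alpha}$ approximates a point mass at $x$, since $g*\widetilde g(y)=\langle g,T_yg\rangle_{L^2}$ always attains its maximum modulus at $y=0$ and is unchanged when $g$ is translated. The identity $\varphi(x)=\int_{\widehat G}\chi(x)\,{\rm d}\sigma_\varphi(\chi)$ for general $x$ must instead be obtained by first extending the relation $\int_{\widehat G}\widecheck h\,{\rm d}\sigma_\varphi=\int_G\varphi h\,{\rm d}\theta_G$ from $h$ of the special form $\sum_i c_i\,g_i*\widetilde{g_i}$ to all of $L^1(G)$ --- which uses the finiteness of $\sigma_\varphi$ supplied by your (correct) $x=0$ computation --- and only then inserting a translated approximate identity. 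Your uniqueness argument at the end is fine.
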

Using Bochner's theorem, we readily get many examples of continuous positive definite functions from corresponding positive finite measures.
\begin{example}[positive definite functions]\label{expdf2}
\begin{itemize}
\item[(i)] For any character $\chi:G\to \mathbb C$ we have $\sigma_\chi=\delta_\chi$, the Dirac point measure in $\chi\in\widehat G$. In particular, any nonnegative constant function is positive definite.
\item[(ii)] For any pointwise product $\varphi_1\cdot\varphi_2$ where both $\varphi_1$ and $\varphi_2$ are continuous and positive definite, and we have $\sigma_{\varphi_1\cdot\varphi_2}=\sigma_{\varphi_1}*\sigma_{\varphi_2}$.
\item[(iii)] For any convolution $\varphi_1 * \varphi_2$ where $\varphi_1$ is continuous positive definite and $\varphi_2\in PK(G)$, and we have $\sigma_{\varphi_1 * \varphi_2}=\widehat{\varphi_1}\cdot \sigma_{\varphi_2}=\widehat{\varphi_2}\cdot \sigma_{\varphi_1}$, compare \cite[Prop.~4.3]{BF}. Indeed
\begin{displaymath}
\begin{split}
\int_{\widehat{G}} \chi(x) &\,{\rm d} \sigma_{\varphi_1 * \varphi_2} (\chi)= \varphi_1 * \varphi_2 (x) =\int_G \varphi_1(t) \varphi_2 (x-t) \, {\rm d}t \\
&=\int_G \varphi_1(t) \int_{\widehat{G}} \chi(x-t) \, {\rm d} \sigma_{\varphi_2} (\chi) \, {\rm d}t \\
&=\int_{\widehat{G}} \int_G \varphi_1(t)  \overline{\chi(t)} \, {\rm d}t \, \chi(x) \, {\rm d} \sigma_{\varphi_2} (\chi) \\
&=\int_{\widehat{G}}\chi(x) \widehat{\varphi_1}(\chi)  \, {\rm d} \sigma_{\varphi_2} (\chi) =\int_{\widehat{G}}\chi(x)  \, {\rm  d} \left( \widehat{\varphi_1} \cdot \sigma_{\varphi_2} \right) (\chi) \ ,
\end{split}
\end{displaymath}
where Fubini can be used by positivity or because $\varphi_2$ has compact support. This shows that $\sigma_{\varphi_1 * \varphi_2} = \widehat{\varphi_1} \cdot \sigma_{\varphi_2}$. The other identity is shown similarly.

\item[(iv)] For any $f*\widetilde f$ where $f\in L^2(G)$, we have $\sigma_{f*\widetilde f}=|\widecheck f|^2\cdot \theta_G$ by Remark~\ref{pdfex}.
\item[(v)]  For any pointwise product $g\odot h$ where both $g\in C(G)$ and $h\in C(H)$ are positive definite, and we have  $\sigma_{g\odot h}=\sigma_g \otimes \sigma_h$, the product measure of $\sigma_g$ and $\sigma_h$ in $\mathcal M(\widehat G\times \widehat H)$.
\end{itemize}
\end{example}

\section{Positive definite measures}\label{app:B}

\subsection{Measures and linear functionals}

We denote by $\mathcal M(G)$ the space of complex  Radon measures on the compactly generated LCA group $G$. For $\mu$-integrable functions $f: G\to\mathbb C$ we will usually write $\langle \mu, f\rangle$ or $\mu(f)$ instead of  $\int_G f(x) \, {\rm d} \mu(x)$.
 Note that $\mu\in \mathcal M(G)$ gives rise to a linear functional $L_\mu: C_c(G)\to \mathbb C$ via $L_\mu(f)=\mu(f)$. There is an intimate connection between $\cM(G)$ and the space of linear functionals on $C_c(G)$. We will briefly review it, as working with unbounded complex measures does not seem entirely standard. We refer the reader to \cite{Die} for a more detailed discussion on Radon measures and linear functionals on $C_c(G)$.

Recall  that a linear function $L:C_c(G)\to\mathbb C$ is \textit{positive} if $L(f)\ge 0$ for all non-negative $f\in C_c(G)$. For measures, positivity of $L_\mu$ is equivalent to positivity of $\mu$. Now the Riesz Representation Theorem tells us that any positive linear function $L : C_c(G) \to \CC$ is given by a positive regular Borel measure.

 \begin{theorem}\label{Riesz Rep} \cite[Thm.~2.14]{RUD2} Let $L : C_c(G) \to \CC$ be positive and linear. Then, there exists a unique positive regular Borel measure $\mu\in \cM(G)$ such that $L= L_{\mu}$. \qed
\end{theorem}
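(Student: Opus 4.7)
The plan is to follow the classical construction of the representing measure from the functional, as in Rudin. Since the theorem is purely topological in nature (the LCA structure plays no role), I will work with $G$ merely as a locally compact Hausdorff space. The strategy proceeds by building an outer measure on open sets from $L$, extending it by regularity, identifying the $\sigma$-algebra of measurable sets via Carath\'eodory's criterion, and finally verifying the integration identity and uniqueness.

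First, I would \emph{define the candidate measure on open sets} by setting, for each open $V\subseteq G$,
\[
\mu(V)=\sup\{L(f):f\in C_c(G),\ 0\le f\le 1,\ \supp(f)\subseteq V\},
\]
and then \emph{extend to arbitrary $E\subseteq G$} by outer regularity,
\[
\mu^*(E)=\inf\{\mu(V):V\text{ open},\ E\subseteq V\}.
\]
Positivity of $L$ makes $\mu^*$ monotone and nonnegative. The next step is to check that $\mu^*$ is countably subadditive, which I would do by the standard trick: for any $f\in C_c(G)$ with $\supp(f)\subseteq\bigcup_n V_n$, compactness of $\supp(f)$ together with a partition of unity subordinate to a finite subcover lets one split $f=\sum_j f_j$ with $\supp(f_j)\subset V_{n_j}$, whence $L(f)=\sum_j L(f_j)\le\sum_n\mu(V_n)$.

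Next I would \emph{apply Carath\'eodory's criterion} to recover a genuine measure. The key claim is that every open set (hence every Borel set) is $\mu^*$-measurable. This is where local compactness and Hausdorffness enter through Urysohn's lemma: one proves that if $K$ is compact and $V$ is open with $K\subset V$, there exists $f\in C_c(G)$ with $1_K\le f\le 1_V$. Using such cutoffs one separates the contribution of an open set $U$ from its complement in any test set. Restricting $\mu^*$ to the Borel $\sigma$-algebra yields a positive Borel measure $\mu$ that is outer regular by construction and inner regular on open sets of finite measure.

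Finally, I would \emph{verify the representation identity} $L(f)=\int_G f\,d\mu$ for $f\in C_c(G)$. By linearity and by splitting into real and imaginary positive and negative parts, it suffices to treat $0\le f\in C_c(G)$. Partitioning $[0,\|f\|_\infty]$ into $0=t_0<t_1<\cdots<t_N$ and setting $K_j=\{f\ge t_j\}$ (which are compact), I approximate $f$ both above and below by simple functions built from the $K_j$; using Urysohn cutoffs to interpolate between $K_j$ and a slightly larger open neighborhood, I can bound $L(f)$ between the corresponding lower and upper Darboux sums for $\int f\,d\mu$, and let the partition refine. \emph{Uniqueness} then follows from outer regularity: two regular Borel measures giving the same functional on $C_c(G)$ must agree on compact sets (approximate $1_K$ from above by a sequence in $C_c(G)$ via Urysohn) and hence, by regularity, on all Borel sets.

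The main obstacle is the Carath\'eodory measurability of open sets, which is always the subtle technical heart of the argument; it hinges on constructing compactly supported separating functions between disjoint closed/open pairs, and this is precisely where the hypotheses of local compactness and Hausdorffness are indispensable. The integration identity is tedious but routine once the measure is in hand.
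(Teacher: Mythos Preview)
The paper does not actually prove this statement: it is quoted verbatim from Rudin's \emph{Real and Complex Analysis} with a \qed and no argument, and is used only as a black box in what follows. Your sketch is a faithful outline of precisely that classical construction from the cited reference, and it is correct as far as it goes. Since there is no ``paper's own proof'' to compare against beyond the citation, there is nothing further to contrast; your write-up simply supplies what the authors deliberately omitted.
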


We give a more general version of this standard result. The key is continuity with respect to the \textit{inductive topology}, i.e., the topology induced by the supremum norm on compact sets. Let us note that $L$ is continuous with respect to the inductive topology if and only if for any compact $K \subseteq G$ there exists a finite constant $C_K$ such that
\begin{equation}\label{eq:induct top}
\left| L(f) \right| \leq C_K \cdot \| f \|_\infty
\end{equation}
for all $f \in C_c(G)$ with $\supp(f) \subseteq K$, where $ \| f \|_\infty=\sup\{|f(x)|:x\in G\}$ denotes the supremum norm of $f$.  We start with two simple observations.

\begin{lemma}\label{variation}
Let $L: C_c(G) \to \CC$ be a linear function which satisfies \eqref{eq:induct top}. For any non-negative $f \in C_c(G)$ we can define
\begin{displaymath}
\left| L \right| (f) := \sup \{ |L(g) | : g \in C_c(G) \text{ such that } |g| \leq f \} <\infty\ .
\end{displaymath}
Then $|L|$ can be extended to a positive linear function on $C_c(G)$. In particular, there exists a unique positive regular Borel measure $\nu\in\cM(G)$ such that $|L|(f)=\int_G f(x) \, {\rm d} \nu(x)$ for all $f \in C_c(G)$.
\end{lemma}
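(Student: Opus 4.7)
The plan is to show $|L|$ is well defined and finite on non-negative $f\in C_c(G)$, then verify it is additive and positively homogeneous there, then extend by linearity to a positive linear functional on $C_c(G)$, and finally invoke Theorem~\ref{Riesz Rep}.

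First, finiteness. For non-negative $f\in C_c(G)$ set $K=\supp(f)$. If $g\in C_c(G)$ satisfies $|g|\le f$, then $g$ vanishes wherever $f$ vanishes, whence $\supp(g)\subseteq K$ and $\|g\|_\infty\le\|f\|_\infty$. Applying \eqref{eq:induct top} gives $|L(g)|\le C_K\|f\|_\infty$, so $|L|(f)\le C_K\|f\|_\infty<\infty$. Positive homogeneity $|L|(\lambda f)=\lambda|L|(f)$ for $\lambda\ge 0$ is immediate from the definition (rescale the competing $g$).

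Next, additivity of $|L|$ on non-negative elements of $C_c(G)$, which I expect to be the main technical step. For the inequality $|L|(f_1)+|L|(f_2)\le |L|(f_1+f_2)$, pick $g_i\in C_c(G)$ with $|g_i|\le f_i$, and choose unimodular $\alpha_i\in\CC$ with $\alpha_i L(g_i)=|L(g_i)|$. Then $h=\alpha_1 g_1+\alpha_2 g_2\in C_c(G)$ satisfies $|h|\le f_1+f_2$, and $|L(h)|=|L(g_1)|+|L(g_2)|$; taking suprema yields the claim. For the reverse inequality, given $g\in C_c(G)$ with $|g|\le f_1+f_2$, define
\[
g_i(x)=\begin{cases} g(x)\,\dfrac{f_i(x)}{f_1(x)+f_2(x)} & \text{if } f_1(x)+f_2(x)>0,\\ 0 & \text{otherwise.}\end{cases}
\]
Since $|g|\le f_1+f_2$, the functions $g_i$ extend continuously through the zero set of $f_1+f_2$ (the bound $|g_i|\le f_i$ forces $g_i\to 0$ there), they have compact support in $\supp(f_1+f_2)$, and $g=g_1+g_2$ with $|g_i|\le f_i$. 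Hence $|L(g)|\le |L(g_1)|+|L(g_2)|\le |L|(f_1)+|L|(f_2)$, and taking the supremum over $g$ gives the claim.

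Having additivity and positive homogeneity on the cone of non-negative functions, I extend $|L|$ to all of $C_c(G)$ in the usual way: decompose a real $f\in C_c(G)$ as $f=f^+-f^-$ with $f^\pm\ge 0$ continuous and compactly supported, set $|L|(f):=|L|(f^+)-|L|(f^-)$, and then extend $\CC$-linearly via $|L|(f)=|L|(\operatorname{Re}f)+i\,|L|(\operatorname{Im}f)$. The additivity on the cone shows that this definition does not depend on the chosen decomposition and is $\RR$-linear on real functions, hence $\CC$-linear on $C_c(G)$. By construction $|L|(f)\ge 0$ for $f\ge 0$, so $|L|$ is a positive linear functional on $C_c(G)$. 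Applying Theorem~\ref{Riesz Rep} produces the unique positive regular Borel measure $\nu\in\cM(G)$ with $|L|(f)=\int_G f\,{\rm d}\nu$ for every $f\in C_c(G)$.
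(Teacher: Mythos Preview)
Your proof is correct and follows essentially the same approach as the paper: show finiteness via \eqref{eq:induct top}, establish linearity on the non-negative cone, extend to real and then complex functions via $f=f^+-f^-$ and real/imaginary parts, and apply Theorem~\ref{Riesz Rep}. The paper dismisses the additivity on the cone as ``straightforward,'' whereas you spell out both inequalities (including the partition-of-$g$ trick for subadditivity), so your version is in fact more complete.
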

\begin{proof}
Let us first observe that for all non-negative $f \in C_c(G)$ we have $|L|(f) < \infty$ by \eqref{eq:induct top}. Therefore $|L|$ is well defined as a function, and it is straightforward to see that $|L|$ is linear on the cone of non-negative compactly supported functions.
We can then extend $|L|$ to an $\RR$-linear mapping on the space of real valued compactly supported continuous functions via
\begin{displaymath}
|L|(f) = |L|(f_+)-|L|(f_-) \,,
\end{displaymath}
where as usual $f_+(x) =\max \{ f(x) , 0 \}$ and $f_-(x) =-\min \{ f(x) , 0 \}$ are nonnegative functions in $C_c(G)$. Finally we extend $|L|$ to $C_c(G)$ via
\begin{displaymath}
|L|(f) = |L|(\mbox{Re}(f))+i|L|(\mbox{Im}(f)) \,.
\end{displaymath}
\end{proof}
Next we show that positivity implies continuity in the inductive topology.
\begin{lemma}\label{positive implies measure} Let $L : C_c(G) \to \CC$ be positive and linear.  Then $L$ satisfies \eqref{eq:induct top}.
\end{lemma}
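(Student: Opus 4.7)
The plan is a standard dominate-by-a-bump argument. Fix a compact set $K \subseteq G$. Since $G$ is a locally compact Hausdorff space, Urysohn's lemma furnishes a function $\varphi_K \in C_c(G)$ with $0 \le \varphi_K \le 1$ and $\varphi_K \equiv 1$ on $K$. Set $C_K := L(\varphi_K)$, which is a nonnegative real number by positivity of $L$. I claim that this $C_K$ satisfies \eqref{eq:induct top}.

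First I would check that $L$ is real on real-valued functions. If $f \in C_c(G)$ is real, then $f = f_+ - f_-$ with $f_\pm \in C_c(G)$ nonnegative, so $L(f) = L(f_+) - L(f_-) \in \mathbb R$. Next, take any real $f \in C_c(G)$ with $\supp(f) \subseteq K$. Then $\|f\|_\infty \varphi_K \pm f \ge 0$ pointwise, since both summands agree with $\|f\|_\infty \varphi_K \pm f$ on $K$ (where $\varphi_K = 1$) and equal $\|f\|_\infty \varphi_K \ge 0$ off $K$ (where $f$ vanishes). Positivity of $L$ therefore gives $\|f\|_\infty L(\varphi_K) \pm L(f) \ge 0$, i.e., $|L(f)| \le C_K \cdot \|f\|_\infty$.

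The complex case follows by a phase-rotation trick. Given $f \in C_c(G)$ with $\supp(f) \subseteq K$, choose $\theta \in \mathbb R$ with $e^{-i\theta} L(f) = |L(f)|$. Let $g := \mathrm{Re}(e^{-i\theta} f)$, which is a real-valued function in $C_c(G)$ supported in $K$ with $\|g\|_\infty \le \|f\|_\infty$. Then $|L(f)| = e^{-i\theta} L(f) = L(e^{-i\theta} f)$ is real, so it equals $L(g)$, and the real estimate above yields
\[
|L(f)| = L(g) \le C_K \cdot \|g\|_\infty \le C_K \cdot \|f\|_\infty .
\]

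There is no real obstacle here; the only point requiring attention is the availability of the bump function $\varphi_K$, which rests on local compactness and Hausdorffness of $G$, both available in the LCA setting. The phase-rotation step is a standard device that avoids any factor of $2$ that would arise from decomposing $f$ into real and imaginary parts separately.
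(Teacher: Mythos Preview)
Your proof is correct and follows essentially the same strategy as the paper's: choose a bump function dominating $1_K$, use positivity to bound $L$ on real functions supported in $K$, and then reduce the complex case to the real one. The only difference is cosmetic---the paper splits $f$ into real and imaginary parts and ends up with $C_K = 2L(g)$, whereas your phase-rotation trick yields the sharper constant $C_K = L(\varphi_K)$.
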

\begin{proof} Let $K \subseteq G$ be compact and choose $g \in C_c(G)$ such that $g \geq 1_K$. Then for all $f \in C_c(G)$ with $\supp(f) \subseteq K$ we have
\begin{displaymath}
|\mbox{Re}(f)|, |\mbox{Im}(f)| \leq \|f \|_\infty g \,.
\end{displaymath}
Since $L$ is positive and linear, it preserves inequalities. Therefore
\begin{displaymath}
-L(\|f \|_\infty g) \leq L(\mbox{Re}(g)) , L(\mbox{ Im}(g) ) \leq L( \|f \|_\infty g) \,.
\end{displaymath}
From here, we get that
\begin{displaymath}
\left| L(f) \right|= \left| L(\mbox{Re}(f)+i\mbox{ Im}(f)) \right| \leq 2 L(g)\cdot \| f\|_\infty \,.
\end{displaymath}
We can thus use the constant $C_K= 2 L(g)$, which only depends on $K$.
\end{proof}

Now, we are ready to formulate the general Riesz Representation Theorem \cite{Die}.

\begin{theorem} Let $L: C_c(G) \to \CC$ be any linear function. Then $L$ satisfies \eqref{eq:induct top} if and only if there exists a complex Radon measure $\mu\in \cM(G)$ such that $L= L_{\mu}$.
\end{theorem}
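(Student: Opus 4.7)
The plan is a standard polar--decomposition style argument. For the easy direction, if $L=L_\mu$ for some $\mu\in \cM(G)$, then for any compact $K\subseteq G$ and $f\in C_c(G)$ with $\supp(f)\subseteq K$ the bound
\[
|L(f)|=|\mu(f)|\le |\mu|(K)\cdot \|f\|_\infty
\]
gives \eqref{eq:induct top} with $C_K=|\mu|(K)$, which is finite because the total variation of a complex Radon measure is itself a positive Radon measure and hence finite on compacta.

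For the converse, which is the substantive direction, I assume $L$ satisfies \eqref{eq:induct top}. First I apply Lemma~\ref{variation} to form the positive linear functional $|L|$ on $C_c(G)$, and then invoke Theorem~\ref{Riesz Rep} to obtain a unique positive regular Borel measure $\nu$ representing $|L|$. The defining property of $|L|$ immediately yields
\[
|L(f)|\le |L|(|f|) =\int_G |f|\,{\rm d}\nu = \|f\|_{L^1(\nu)} \qquad (f\in C_c(G))\ ,
\]
so $L$ is continuous with respect to the $L^1(\nu)$-seminorm on $C_c(G)$.

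Next I pass from $C_c(G)$ to $L^1(\nu)$. Since $G$ is compactly generated and hence $\sigma$-compact, the locally finite measure $\nu$ is $\sigma$-finite; this makes $C_c(G)$ norm-dense in $L^1(\nu)$ and legitimises the standard duality $L^1(\nu)^* \cong L^\infty(\nu)$. Extending $L$ by continuity to a bounded linear functional on $L^1(\nu)$ of norm at most one, I obtain $\varphi\in L^\infty(\nu)$ with $\|\varphi\|_\infty\le 1$ such that $L(f)=\int_G f\varphi\,{\rm d}\nu$ for all $f\in C_c(G)$. Setting $\mu:=\varphi\cdot\nu$, the pointwise bound $|\varphi|\le 1$ gives $|\mu|(K)=\int_K |\varphi|\,{\rm d}\nu\le \nu(K)<\infty$ for every compact $K$, so $\mu$ is locally finite and lies in $\cM(G)$; by construction $L=L_\mu$. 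Uniqueness is automatic, as two elements of $\cM(G)$ that agree on $C_c(G)$ coincide.

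The step I expect to be the main obstacle is the extension--duality argument, which requires $\sigma$-finiteness of $\nu$ in order to identify $L^1(\nu)^*$ with $L^\infty(\nu)$; this is the place where compact generation of $G$ enters in an essential way. A secondary technical point is to verify that $\mu=\varphi\nu$ is genuinely a complex Radon measure: local finiteness follows from the above bound, while inner/outer regularity is inherited from that of $\nu$ via the polar decomposition $\mu=\varphi\nu$.
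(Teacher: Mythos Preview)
Your argument is correct, but it follows a genuinely different route from the paper's. For the substantive direction the paper performs a Jordan--type decomposition at the level of functionals: it first splits $L$ into real and imaginary parts, then writes each real-valued piece as $|L|-(|L|-L)$, observes that both summands are positive linear functionals, and applies the classical Riesz theorem (Theorem~\ref{Riesz Rep}) to each of the four positive pieces separately. You instead build the total-variation measure $\nu$ once, bound $|L(f)|\le\|f\|_{L^1(\nu)}$, and then invoke $L^1(\nu)^*\cong L^\infty(\nu)$ to produce a density $\varphi$ with $|\varphi|\le 1$ and set $\mu=\varphi\,\nu$. Your approach is attractive because it delivers the polar decomposition $\mu=\varphi\,\nu$ and the inequality $|\mu|\le\nu$ for free; the paper's approach is more elementary in that it avoids $L^p$-duality altogether and hence does not need $\sigma$-finiteness of $\nu$ (so it would go through without the standing hypothesis that $G$ be compactly generated). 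For the easy direction your bound $|L(f)|\le|\mu|(K)\|f\|_\infty$ is cleaner than the paper's reduction to four positive measures via Lemma~\ref{positive implies measure}, but the two are equivalent in spirit.
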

\begin{proof}
``$\Rightarrow$'' This follows immediately from the fact that each complex measure can be written as a linear combination of four positive measures and from Lemma~\ref{positive implies measure}.

``$\Leftarrow$'' If $L$ is real valued, meaning $L(f) \in \RR$ whenever when $f \in C_c(G)$ takes only real values, then by Lemma \ref{variation} we can write $L=|L|-(|L|-L)$ as a linear combination of two positive linear mappings, and the claim follows from Theorem~\ref{Riesz Rep}.
The general case follows now by observing that we can write any linear transformation satisfying \eqref{eq:induct top} as a combination of two real valued linear transformations satisfying \eqref{eq:induct top}.
\end{proof}
Given any measure $\mu\in\cM(G)$, by Lemma \ref{variation} there exists a positive measure $|\mu|\in \cM(G)$, called the \textit{variation} of $\mu$, such that
$|L_{\mu}|= L_{| \mu |}$.
In particular for a point measure $\mu=\sum_{g\in G} c_g\,\delta_g$, its variation is the positive point measure $|\mu|=\sum_{g\in G} |c_g|\,\delta_g$.
A measure $\mu$ is called \textit{finite} if $|\mu|(G)<\infty$.
Given any measure $\mu\in\mathcal M(G)$ and any compact $K\subseteq G$ we define
\begin{displaymath}
\| \mu \|_K := \sup_{x \in G} \left| \mu \right| (x+K) \,.
\end{displaymath}
A measure $\mu$ is called \textit{translation bounded} if for all compact $K \subseteq G$ we have $\| \mu \|_K < \infty$. We denote  by $\cM^\infty(G)$ the space of translation bounded measures on $G$. 

For $\mu,\nu\in \mathcal M(G)$, one of them being bounded, convolution $\mu *\nu\in \mathcal M(G)$ is defined via $\mu * \nu(f)=\int_G\int_G f(x+y) \,{\rm d}\mu(x) \, {\rm d}\nu(y)$. This generalises convolution from functions to measures. Indeed, if
 $f,g$ are integrable functions on $G$, then $(f\cdot\theta_G) * (g\cdot\theta_G)= (f*g)\cdot\theta_G$. 
Similarly, convolution between an integrable function $f$ and a measure $\mu$ is understood 
via $(f*\mu)\cdot\theta_G=(f\cdot\theta_G)*\mu$. Thus $f*\mu$ is a function on $G$.

\subsection{Positive definite measures}
\begin{definition}\cite[Def.~4.2]{BF}
A measure $\mu\in \mathcal M(G)$ is \textit{positive definite} if
\begin{displaymath}
\langle  \mu, f * \widetilde f \rangle \ge 0
\end{displaymath}
for all $f\in C_c(G)$.
\end{definition}
The above definition generalises positive definiteness to measures, compare \cite[Ch.~I]{BF} and \cite[Ch.~4]{ARMA1}.  A continuous function $\varphi:G\to\mathbb C$ is positive definite if and only if the measure $\varphi\cdot \theta_G$ is positive definite \cite[Prop.~4.1]{BF}.

\begin{remark}
If $\mu\in\mathcal M(G)$ is positive definite, then $\langle \mu ,g \rangle\ge0$ for any positive definite $g\in C(G)\cap L^1(\mu)$, compare \cite[Thm.~4.3]{ARMA1}. For $g\in PK(G)$ this follows from Lemma~\ref{pdpk}.
\end{remark}
The following examples are direct consequences of the definition.
\begin{example}[positive definite measures]
\begin{itemize}

\item[(i)] The measure $\delta_0$ is positive definite as
\begin{displaymath}
\langle \delta_0, f * \widetilde f\rangle = f*\widetilde f (0)=\int_G |f(x)|^2 {\rm d}x\ge 0\ .
\end{displaymath}
\item[(ii)] Any Haar measure $\theta_G$ is positive definite as $\langle \theta_G, f * \widetilde f\rangle = |\langle \theta_G, f \rangle|^2 \ge 0$.
\item[(iii)] Let $H\subseteq G$ be a subgroup of $G$ and let $\theta_H$ be a Haar measure on $H$. Then $\theta_H$ is positive definite on $H$. But $\theta_H$, viewed as a measure on $G$, is also positive definite. This holds since for $f\in C_c(G)$, we have $f|_H\in C_c(H)$ for its restriction to $H$.
\item[(iv)] If $\mu$ is a finite measure, then $\mu * \widetilde \mu$ is positive definite as
\begin{displaymath}
\langle \mu * \widetilde \mu , f * \widetilde f \rangle = \langle \theta_G, |\widetilde f * \mu|^2 \rangle \ge 0 \ .
\end{displaymath}
\end{itemize}
\end{example}

\begin{lemma}\label{pdpk} Let $\mu \in \mathcal M(G)$ be positive definite and $g \in PK(G)$. Then $\mu*g$ is continuous and positive definite.
\end{lemma}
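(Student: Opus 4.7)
The plan is to establish continuity and positive definiteness of $\mu*g$ separately. \emph{Continuity} is a routine argument. Positive definite measures are translation bounded (cf.\ \cite[Prop.~4.4]{BF}), and $g\in PK(G)$ is uniformly continuous because it is continuous with compact support. Writing $(\mu*g)(x)-(\mu*g)(x')=\int(g(x-y)-g(x'-y))\,d\mu(y)$, the integrand has support in a fixed compact set for $x,x'$ in a bounded region, and its sup-norm is controlled by the modulus of continuity of $g$. Combining with translation-boundedness of $|\mu|$ yields continuity of $\mu*g$.

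For \emph{positive definiteness}, I would verify the defining inequality $\sum_{i,j}c_i\overline{c_j}(\mu*g)(x_i-x_j)\ge 0$ directly. A Fubini argument (valid because the integrands share a compact $y$-support) rewrites this sum as $\mu(\psi)$, where $\psi(y) := \sum_{i,j}c_i\overline{c_j}\,g(x_i-x_j-y)$. Applying Bochner's Theorem~\ref{bochthm} to $g=\widecheck{\sigma_g}$ and interchanging sum with integral yields
\[
\psi(y) = \int_{\widehat G}|\Phi(\chi)|^2\,\overline{\chi(y)}\,d\sigma_g(\chi),\qquad \Phi(\chi):=\sum_i c_i\chi(x_i),
\]
so that $\psi$ is, up to reflection on $\widehat G$, the inverse Fourier transform of the positive finite measure $|\Phi|^2\sigma_g$. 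By Bochner, $\psi$ is continuous positive definite, and since its support is contained in $\bigcup_{i,j}(x_i-x_j-\supp g)$ we have $\psi\in PK(G)$. Compact support implies $\psi\in L^1(\mu)$, so the Remark preceding this lemma (based on \cite[Thm.~4.3]{ARMA1}) delivers $\mu(\psi)\ge 0$.

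\emph{The main obstacle} is this last step, $\mu(\psi)\ge 0$. The definition of positive definite measure only furnishes $\mu(f*\widetilde f)\ge 0$ for $f\in C_c(G)$, whereas $\psi$ need not be of this form: its Bochner measure $|\Phi|^2\sigma_g$ may be singular, while $f*\widetilde f$ always has an absolutely continuous Bochner measure $|\widecheck f|^2\theta_{\widehat G}$. A self-contained proof would approximate $\psi$ by linear combinations of $f_n*\widetilde{f_n}$ via the polarisation identity and a weak approximation of $\sigma_g$, controlling supports so that translation-boundedness of $\mu$ permits passage to the limit; equivalently, one can invoke the measure-theoretic form of Bochner's theorem asserting that the Fourier transform of a positive definite measure is itself a positive measure on $\widehat G$.
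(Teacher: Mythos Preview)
Your reduction to $\mu(\psi)$ with $\psi\in PK(G)$ is correct, and you rightly flag the crux: concluding $\mu(\psi)\ge0$. But within the paper's logic your appeal to the Remark is circular---that Remark explicitly derives its $PK(G)$ case \emph{from} Lemma~\ref{pdpk}. Your proposed workarounds are valid externally, yet Theorem~\ref{Bochmeas} is precisely what the paper is avoiding (see Remark~\ref{rem:FT}(i)), and the approximation argument is only sketched, not executed.

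The paper sidesteps the obstacle entirely by a reordering trick. Instead of testing function positive definiteness of $\mu*g$ directly, it tests the measure version: for $f\in C_c(G)$ one wants $(\mu*g)*f*\widetilde f(0)\ge0$, and one rewrites $(\mu*g)*f*\widetilde f=(\mu*f*\widetilde f)*g$. Now $\varphi_1:=\mu*f*\widetilde f$ is a continuous positive definite function, since $\langle\mu*f*\widetilde f,\,h*\widetilde h\rangle=\langle\mu,\,(f^\dagger*h)*\widetilde{(f^\dagger*h)}\rangle\ge0$ uses only the \emph{definition} of measure positive definiteness of $\mu$. Example~\ref{expdf2}(iii) then gives that $\varphi_1*g$ is positive definite, so its value at $0$ is nonnegative. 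The key contrast with your approach: $\mu$ is only ever evaluated on functions of the explicit form $h*\widetilde h$ with $h\in C_c(G)$, so the statement you wanted as input is obtained here as output.

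A minor side remark: your Bochner computation showing $\psi\in PK(G)$ is correct but unnecessary. Positive definiteness of $\psi$ follows directly from that of $g$ via the substitution $e_{ik}=c_id_k$, $w_{ik}=x_i-z_k$, which turns $\sum_{k,l}d_k\overline{d_l}\,\psi(z_k-z_l)$ into $\sum e_{ik}\overline{e_{jl}}\,g(w_{ik}-w_{jl})\ge0$.
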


\begin{proof}
Take any $f\in C_c(G)$ and note that $\mu * f* \widetilde f$ is positive definite as $\langle\mu * f* \widetilde f, g*\widetilde g \rangle=\langle \mu, (f^\dag*g)*\widetilde{(f^\dag*g)}\rangle\ge 0$ for any $g\in C_c(G)$, compare \cite[Prop.~4.4]{BF}. Now apply Remark~\ref{expdf2} (iii) to $\varphi_1=\mu * f* \widetilde f$ and $\varphi_2=g$. If follows that $(\mu*g)*f*\widetilde f=\varphi_1*\varphi_2$ is positive definite. In particular we have $\langle \mu*g, f*\widetilde f\rangle=(\mu*g)*f*\widetilde f(0)\ge 0$.
\end{proof}

\begin{corollary} Let $\mu \in \mathcal M(G)$ be positive definite and $\nu$ be a positive definite measure with compact support. Then $\mu*\nu$ is positive definite.
\end{corollary}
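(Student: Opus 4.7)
The plan is to reduce the statement to Lemma~\ref{pdpk} via an approximate-identity argument. Since $\nu$ has compact support, $\nu$ is finite, so $\mu*\nu\in\mathcal{M}(G)$ is well defined via $\langle\mu*\nu,h\rangle=\int\!\int h(x+y)\,{\rm d}\mu(x)\,{\rm d}\nu(y)$ for $h\in C_c(G)$ (the inner integrals have compact range because $\nu$ is compactly supported).

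First I would construct an approximate identity $(\phi_n)_{n\in\NN}$ in $PK(G)$. Choose $h_n\in C_c(G)$ nonnegative, real-valued and symmetric (so $\widetilde{h_n}=h_n$), with $\int h_n\,{\rm d}\theta_G=1$ and $\mathrm{supp}(h_n)$ shrinking to $\{0\}$. Then $\phi_n:=h_n*\widetilde{h_n}\in PK(G)$ by Example~\ref{pdfex}; moreover $\phi_n\ge 0$, $\int\phi_n\,{\rm d}\theta_G=1$, and $\phi_n=\widetilde{\phi_n}=\phi_n^\dagger$.

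Now I would apply Lemma~\ref{pdpk} twice. Since $\nu$ is positive definite and $\phi_n\in PK(G)$, the function $\nu*\phi_n$ is continuous and positive definite; being compactly supported as well (both factors are), it lies in $PK(G)$. Applying Lemma~\ref{pdpk} again with $\mu$ and $\nu*\phi_n\in PK(G)$, the function $(\mu*\nu)*\phi_n=\mu*(\nu*\phi_n)$ (equality by associativity, which holds because $\nu$ and $\phi_n$ have compact support) is continuous and positive definite. In particular, the measure $((\mu*\nu)*\phi_n)\cdot\theta_G$ is positive definite.

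Finally I would pass to the vague limit. For $h\in C_c(G)$ a straightforward Fubini computation using $\phi_n=\phi_n^\dagger$ gives
\[
\langle ((\mu*\nu)*\phi_n)\cdot\theta_G, h\rangle=\langle \mu*\nu,\phi_n*h\rangle.
\]
By uniform continuity of $h$ and the shrinking supports of $\phi_n$, we have $\phi_n*h\to h$ uniformly with supports contained in a fixed compact set, i.e.\ in the inductive topology on $C_c(G)$. Since $\mu*\nu$ is continuous in this topology, the right-hand side converges to $\langle \mu*\nu,h\rangle$, proving that $((\mu*\nu)*\phi_n)\cdot\theta_G\to \mu*\nu$ vaguely. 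Applying this to $h=f*\widetilde f$ for arbitrary $f\in C_c(G)$ and passing to the limit in the inequality $\langle((\mu*\nu)*\phi_n)\cdot\theta_G,f*\widetilde f\rangle\ge 0$ yields $\langle \mu*\nu,f*\widetilde f\rangle\ge 0$, so $\mu*\nu$ is positive definite. The main obstacle is this final step: one must verify that $\phi_n*h\to h$ holds with uniformly bounded supports so that the inductive-topology continuity of $\mu*\nu$ applies, but this is standard for a good approximate identity.
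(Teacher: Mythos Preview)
Your argument is correct, but it takes a longer detour than the paper's proof. The paper avoids approximate identities and limits entirely: for a fixed $f\in C_c(G)$ it observes directly that $\nu*f*\widetilde f\in PK(G)$ (it is continuous, compactly supported since both factors are, and positive definite by Lemma~\ref{pdpk} applied to $\nu$ and $f*\widetilde f\in PK(G)$). Then one more application of Lemma~\ref{pdpk}, now with $\mu$ and $g:=\nu*f*\widetilde f$, gives that $\mu*(\nu*f*\widetilde f)$ is a continuous positive definite function, so its value at $0$ is nonnegative; but this value equals $\langle\mu*\nu,f*\widetilde f\rangle$ since $f*\widetilde f=(f*\widetilde f)^\dagger$. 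That is the whole proof.

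In effect, you insert $\phi_n$ where the paper inserts $f*\widetilde f$ itself. Your route does yield a slightly more general byproduct (the vague convergence $((\mu*\nu)*\phi_n)\cdot\theta_G\to\mu*\nu$), but for the corollary at hand the approximate-identity machinery and the final limit step are unnecessary overhead.
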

\begin{proof}
Let $f \in C_c(G)$. Then $\nu*f*\widetilde{f}$ is positive definite \cite{BF} and hence $\nu*f*\widetilde{f} \in PK(G)$. Then, by Lemma \ref{pdpk} we get that $\mu*(\nu*f *\widetilde{f})$ is a positive definite function. This shows that
\[
\langle \mu*\nu , f*\widetilde f\rangle=(\mu*\nu)*f*\widetilde f(0)\ge 0 \,.
\]
\end{proof}
The following theorem is the measure analogue of Bochner's theorem (Theorem~\ref{bochthm}) for positive definite functions.
\begin{theorem}\label{Bochmeas}\cite[Thm.~4.7]{BF}
A measure $\mu\in \mathcal M(G)$ is positive definite if and only if there exists a positive measure $\sigma_\mu\in \mathcal M(\widehat G)$ such that
\begin{equation}\label{form:trans}
\langle \mu, f*\widetilde f\rangle = \langle \sigma_\mu, |\widecheck f|^2\rangle
\end{equation}
for all $f\in C_c(G)$. In that case, the measure $\sigma_\mu$ is uniquely determined. It is called the Fourier transform of $\mu$. We also write $\sigma_\mu=\widehat\mu$.
\qed
\end{theorem}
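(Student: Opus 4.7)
The easy direction (sufficiency) is immediate: if such a positive $\sigma_\mu$ exists, then for any $f\in C_c(G)$ the function $|\widecheck f|^2$ is nonnegative, so $\langle \mu, f*\widetilde f\rangle=\langle \sigma_\mu,|\widecheck f|^2\rangle\ge 0$, which is exactly the definition of positive definiteness of $\mu$.

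For the non-trivial direction, my plan is to mimic the construction used in the proof of the lattice Poisson summation formula (Proposition~\ref{psflattice}), replacing $\delta_\cL$ by the general positive definite measure $\mu$ and replacing the use of $f^\dag*\delta_\cL\in PK(G)$ by $\mu*g$ for $g\in PK(G)$. First, for each $g\in PK(G)$, Lemma~\ref{pdpk} tells us that $\mu*g$ is continuous and positive definite, so Bochner's Theorem~\ref{bochthm} yields a unique positive finite measure $\nu_g$ on $\widehat G$ with
\[
\mu*g(x)=\int_{\widehat G}\chi(x)\,{\rm d}\nu_g(\chi)
\]
for all $x\in G$. A compatibility identity $\widecheck{g_1}\cdot\nu_{g_2}=\widecheck{g_2}\cdot\nu_{g_1}$ for $g_1,g_2\in PK(G)$ must then be verified, which follows by applying the convolution formula in Example~\ref{expdf2}~(iii) to the two ways of grouping $\mu*g_1*g_2$ (the continuous positive definite function $\mu*g_i$ convolved with the $PK$-function $g_j$).

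Next I would define a positive linear functional $\sigma_\mu$ on $C_c(\widehat G)$ as follows: given $\phi\in C_c(\widehat G)$, pick $g\in PK(G)$ with $\widecheck g\ge 1$ on $\supp(\phi)$ (possible, as cited in the PSF proof from \cite[Prop.~2.4]{BF}, using that $\widecheck g\ge 0$ for $g\in PK(G)$), and set $\langle \sigma_\mu,\phi\rangle:=\langle \nu_g,\phi/\widecheck g\rangle$. The compatibility identity ensures the value is independent of $g$, and for nonnegative $\phi$ the value is nonnegative since $\nu_g$ is positive and $\phi/\widecheck g\ge 0$ on $\supp(\phi)$. Lemma~\ref{positive implies measure} then promotes $\sigma_\mu$ to a positive measure in $\mathcal M(\widehat G)$, and by construction $\widecheck g\cdot\sigma_\mu=\nu_g$ for every $g\in PK(G)$.

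To verify the Fourier identity, I would specialise to $g=f*\widetilde f\in PK(G)$ (Example~\ref{pdfex}) for arbitrary $f\in C_c(G)$. By Example~\ref{expdf2}~(iv), the Bochner measure of $g$ satisfies $\widecheck g=|\widecheck f|^2$, so the relation $\nu_g=\widecheck g\cdot\sigma_\mu$ yields
\[
\nu_g(\widehat G)=\int_{\widehat G}|\widecheck f|^2\,{\rm d}\sigma_\mu=\langle \sigma_\mu,|\widecheck f|^2\rangle.
\]
On the other hand, evaluating $\mu*g$ at $0$ gives $\nu_g(\widehat G)=\mu*g(0)=\langle\mu,g^\dag\rangle$, which coincides with $\langle \mu, f*\widetilde f\rangle$ since $\mu$ is a positive measure and $\langle \mu, f*\widetilde f\rangle$ is real and nonnegative (positive definiteness), combined with $\overline{g(x)}=g(-x)$. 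Uniqueness of $\sigma_\mu$ follows from the construction itself: any candidate must satisfy $\widecheck g\cdot\sigma=\nu_g$ for each $g\in PK(G)$, and since the functions $\widecheck g$ for $g\in PK(G)$ exhaust compact subsets of $\widehat G$, this determines $\sigma$ as a Radon measure on $\widehat G$. The main technical obstacle is establishing the compatibility identity cleanly, as it requires handling the convolution of a (possibly unbounded) continuous positive definite function $\mu*g_1$ with the compactly supported $PK$-function $g_2$ and exchanging integration orders via Fubini, exactly as in Example~\ref{expdf2}~(iii).
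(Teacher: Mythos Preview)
The paper does not actually prove this theorem: it is quoted from \cite[Thm.~4.7]{BF} with a bare \qed, and Remark~\ref{rem:FT}~(i) merely points out that Step~1 of the proof of Proposition~\ref{psflattice} is ``a crucial ingredient'' of such a proof. Your proposal is precisely that suggested route, carried out for a general positive definite $\mu$ in place of $\delta_\cL$, so in spirit you are doing exactly what the paper hints at.

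There is, however, a genuine slip in your final step. You write ``since $\mu$ is a positive measure'' to pass from $\mu*g(0)=\langle\mu,g^\dag\rangle$ to $\langle\mu,f*\widetilde f\rangle$. But $\mu$ is only assumed positive \emph{definite}, not positive, and for complex positive definite $\mu$ one can have $\langle\mu,g^\dag\rangle\neq\langle\mu,g\rangle$ even with $g=f*\widetilde f$ (take e.g.\ $\mu=2\delta_0+e^{i\theta}\delta_1+e^{-i\theta}\delta_{-1}$ on $\RR$). Relatedly, the compatibility identity coming out of Example~\ref{expdf2}~(iii) is $\widehat{g_1}\cdot\nu_{g_2}=\widehat{g_2}\cdot\nu_{g_1}$, with $\widehat{\,}$ rather than $\widecheck{\,}$, so your $\sigma_\mu$ as constructed satisfies $\widehat g\cdot\sigma_\mu=\nu_g$, not $\widecheck g\cdot\sigma_\mu=\nu_g$.

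The fix is exactly what the paper does in the PSF proof: work with $\mu*g^\dag$ instead of $\mu*g$. Since $g\in PK(G)$ implies $g^\dag=\overline g\in PK(G)$, Lemma~\ref{pdpk} still applies; let $\nu_g$ denote the Bochner measure of $\mu*g^\dag$. Then $(\mu*g^\dag)(0)=\langle\mu,g\rangle$ on the nose, and because $\widehat{g^\dag}=\widecheck g$, Example~\ref{expdf2}~(iii) now gives the compatibility $\widecheck{g_1}\cdot\nu_{g_2}=\widecheck{g_2}\cdot\nu_{g_1}$ you wanted. The rest of your argument then goes through unchanged and yields $\langle\mu,f*\widetilde f\rangle=\langle\sigma_\mu,|\widecheck f|^2\rangle$.
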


\begin{remark}\label{rem:FT}
\begin{itemize}
\item[(i)] For the convenience of the reader, we will not use the above theorem in this article but develop Fourier theory for model sets from scratch, based on Bochner's theorem for positive definite functions. The first step of our proof of the Poisson summation formula in Proposition~\ref{psflattice}, however, is a crucial ingredient of a proof of Theorem~\ref{Bochmeas}.

\item[(ii)]
There exists a theory of positive definite measures on certain non-abelian groups, which is also based on some version of Bochner's theorem \cite[Ch.~9]{Wolf}. It has recently been
used for harmonic analysis of non-abelian model sets \cite{BHP2}.

\item[(iii)]
Fourier analysis of unbounded measures beyond the positive definite case is developed by Argabright and de Lamadrid \cite{ARMA1}. There transformability of $\mu\in\mathcal M(G)$ is defined by requiring Eq.~\eqref{form:trans} to hold for some $\sigma_\mu\in \mathcal M(\widehat G)$. The transform is always a translation bounded measure \cite[Thm.~2.5]{ARMA1}.   It is an open question whether any such $\mu$ is a linear combination of positive definite measures, compare the discussion in \cite[p.~29]{ARMA1}.

\item[(iv)]
For example, Eq.~\eqref{form:trans} with $\mu=\delta_g$ reduces to the Fourier inversion formula, and we get $\widehat\mu=g\cdot\theta_{\widehat G}$, where $g:\widehat G\to\mathbb C$ is understood as $g(\chi)=\chi(g)$.

\end{itemize}
\end{remark}

\end{document}